\documentclass[accepted]{uai2026} % after acceptance, for a revised version; 
\usepackage[american]{babel}
\usepackage{natbib} % has a nice set of citation styles and commands
\usepackage{mathtools} % amsmath with fixes and additions
\usepackage{booktabs} % commands to create good-looking tables
\usepackage{tikz} % nice language for creating drawings and diagrams
\usepackage{graphicx}
\usepackage{caption}
\usepackage{subcaption}
\usepackage{amsthm, amssymb}
\newtheorem{theorem}{Theorem}
\newtheorem{lemma}{Lemma}
\newtheorem{proposition}{Proposition}
\usepackage[most]{tcolorbox}
\usepackage{enumitem}

\newcommand{\E}{\mathbb{E}}
\newcommand{\G}{\mathcal{G}}
\newcommand\ci{\perp\!\!\!\perp}
\DeclareMathOperator{\doo}{do}
\DeclareMathOperator{\pa}{pa}
\DeclareMathOperator{\Pa}{Pa}

\newtcolorbox{procedurebox}[1][]{%
  enhanced,
  breakable,
  colback=gray!5,
  colframe=black,
  boxrule=0.8pt,
  arc=2pt,
  left=6pt,right=6pt,top=6pt,bottom=6pt,
  title={#1},
  fonttitle=\bfseries,
  coltitle=black,
  attach boxed title to top left={yshift=-1.5mm, xshift=2mm},
  boxed title style={colback=black!10, colframe=black, boxrule=0.8pt, arc=2pt, left=4pt, right=4pt, top=2pt, bottom=2pt},
}

\title{Robust Weighted Triangulation of Causal Effects Under Model Uncertainty}

\author[1,*]{Rohit Bhattacharya}
\author[1,*]{Ina Ocelli}
\author[2]{Ted Westling}
\affil[$\text{\textcolor{white}{1}}$]{%
\texttt{\hspace{0.8cm} rb17@williams.edu \hspace{1.2
cm} io1@williams.edu \hspace{0.75cm} twestling@umass.edu \vspace{0.25cm}}
}

\affil[1, *]{
    Equal contributors. Dept.~of Computer Science\\
    Williams College
  }
  
\affil[2]{%
    \texttt{} Dept.~of Mathematics and Statistics\\
    University of Massachusetts Amherst
}
\begin{document}
\maketitle

\begin{abstract}

A fundamental challenge in causal inference with observational data is correct specification of a causal model. When there is model uncertainty, analysts may seek to use estimates from multiple candidate models that rely on distinct, and possibly partially overlapping, sets of identifying assumptions to infer the causal effect, a process known as \emph{triangulation}. Principled methods for triangulation, however, remain underdeveloped. Here, we develop a framework for causal effect triangulation that combines model testability methods from causal discovery with statistical inference methods from semiparametric theory, while avoiding explicit model selection and post-selection inference problems. We propose a triangulation functional that combines identified functionals from each model with data-driven measures of model validity. We provide a bound on the distance of the functional from the true causal effect along with conditions under which this distance can be taken to zero. Finally, we derive valid statistical inference for this functional. Our framework formalizes robustness under causal pluralism without requiring agreement across models or commitment to a single specification. We demonstrate its performance through simulations and an empirical application.

\end{abstract}

\section{Introduction}
\label{sec:intro}

Causal inference with observational data is rarely conducted under a single, indisputable set of assumptions. For a single observed dataset, there may be multiple plausible causal models %--for example, an instrumental variable model \citep{angrist1996identification} and a frontdoor model \citep{pearl1995causal}---
that identify the target causal parameter under different sets of assumptions, some of which may be untestable. A common response to model uncertainty is to combine evidence from each of these models with the hope of reaching more robust conclusions than reliance on just a single one, a process commonly referred to as \emph{triangulation} \citep{thurmond2001point, farmer2006developing, lawlor2016triangulation}.

Despite its intuitive appeal, triangulated effect estimation remains underdeveloped. A prominent line of work formalizes triangulation through \emph{evidence factors}, which combine p-values across multiple analyses to test causal hypotheses \citep{rosenbaum2010evidence, rosenbaum2011some, karmakar2019integrating}. While powerful, these approaches are fundamentally geared toward hypothesis testing rather than estimation, and rely on assumptions that are often difficult to justify in observational settings. In particular, they require that different analyses do not share sources of bias and that the joint distribution of p-values satisfies certain stochastic dominance properties under the null. 
\cite{yangstatistical} leveraged the joint convergence properties of semiparametric estimators to design a robust test of the causal null hypothesis that remains valid as long as at least one model is correct. However, \cite{yangstatistical} also focused on hypothesis testing rather than effect estimation.

In this paper, we develop a general framework for triangulating causal effect estimates across multiple candidate models without requiring explicit model selection or correctness of a plurality of models, as required by voting-based triangulation procedures. Our approach instead assigns data-driven weights to each model based on testable implications of its identifying assumptions, such as conditional independence or generalized equality constraints. These weights are used to form a smooth aggregation of model-specific effect estimates, yielding an estimator that is consistent for a \emph{weighted triangulation functional}. We show that the absolute difference between this functional and the true causal parameter can be bounded as a function of (i) the maximal bias among incorrect models and (ii) our ability to separate correct and incorrect models using observed data. We use this bound to provide conditions under which the bias of the triangulation functional is small, yielding a form of robustness to misspecification of some causal models through triangulation. In particular, we show the bias can be controlled if at least one candidate model is correct and testable from observed data. 
We demonstrate the effectiveness of the proposed method through simulations and an empirical application. 

\textbf{Contributions.} Our contributions can be viewed from two complementary perspectives. First, we advance the triangulation literature by providing a principled and quantitative method for combining causal effect estimates that achieves robustness to model misspecification by leveraging testable implications in the observed data. Second, we contribute to the literature on post-selection inference, which studies valid inference after data-driven model selection. Rather than selecting a single model, our approach avoids explicit selection by using smooth weights, thereby mitigating post-selection bias while still incorporating information from model diagnostics. Unlike recent work in causal discovery \citep{gradu2025valid, chang2026post}, which focuses on learning the full causal graph, our framework targets a more modest but practically important goal: testing a minimal set of assumptions required to identify a causal parameter and combining estimators of the identified functionals.

\textbf{Related work.} The methods in \cite{rakshit2025adaptive, kang2016instrumental, sun2021multiply, yao2024deciphering} allow for some models to be misspecified, but require that a plurality of models be correctly specified. This can be a strong assumption in observational settings, where several models may fail for similar reasons. Moreover, these methods are typically tailored to specific classes of causal models, such as proxy-variable approaches or instrumental variables. Bayesian model averaging approaches \citep{horii2021bayesian, steiner2025bayesian} provide another avenue for combining estimates, but often rely on parametric assumptions such as linearity or are similarly restricted to specific model classes.

\section{Causal Graph Preliminaries}
\label{sec:prelims}

\begin{figure}[t]
	\begin{center}
		\scalebox{0.9}{
			\begin{tikzpicture}[>=stealth, node distance=1.25cm]
				\tikzstyle{square} = [draw, thick, minimum size=1.0mm, inner sep=3pt]
				\tikzstyle{format} = [thick, circle, minimum size=1.0mm, inner sep=3pt]
				
				\begin{scope}[]
					\path[->, very thick]
					node[] (z) {$Z$}
					node[right of=z] (a) {$A$} 
					node[right of=a] (m) {$M$}
					node[right of=m] (y) {$Y$}
                    node[above of=a, yshift=-0.2cm] (c) {$C$}
                    node[above of=m, yshift=-0.2cm] (u) {$U$}
					
					(a) edge[] (m)
                    (m) edge[] (y)
                    (z) edge[] (a)
                    (c) edge[] (y)
                    (c) edge[] (a)
                    (c) edge[] (m)
                    (c) edge[] (z)
                    (u) edge[red] (a)
                    (u) edge[red] (y)
					
					node[below of=a, xshift=0.5cm, yshift=0.5cm] (a) {(a) $\G(V\cup U)$} ;
				\end{scope}
				
				\begin{scope}[xshift=5cm]
					\path[->, very thick]
					node[] (z) {$Z$}
					node[right of=z] (a) {$A$} 
					node[square, right of=a] (m) {$m$}
					node[right of=m] (y) {$Y$}
                    node[above of=a, yshift=-0.2cm] (c) {$C$}
                    node[above of=m, yshift=-0.2cm] (u) {$U$}
					
					(m) edge[] (y)
                    (z) edge[] (a)
                    (c) edge[] (y)
                    (c) edge[] (a)
                    (c) edge[] (z)
                    (u) edge[red] (a)
                    (u) edge[red] (y)
					
					node[below of=a, xshift=0.5cm, yshift=0.5cm] (a) {(b) $\G(V\cup U)_{\doo(m)}$} ; 
				\end{scope}
				
			\end{tikzpicture}
		}
	\end{center}
	\vspace{-0.4cm}
	\caption{(a) A hidden variable causal DAG. (b) Graph representing intervention on $M$ from which we can read the Verma constraint $Z \ci Y \mid C$ in $P(V)/P(M\mid A, Z, C)$.}
	\label{fig:verma-graph}
\end{figure}
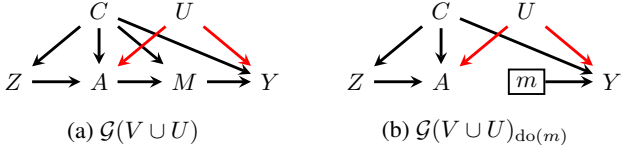

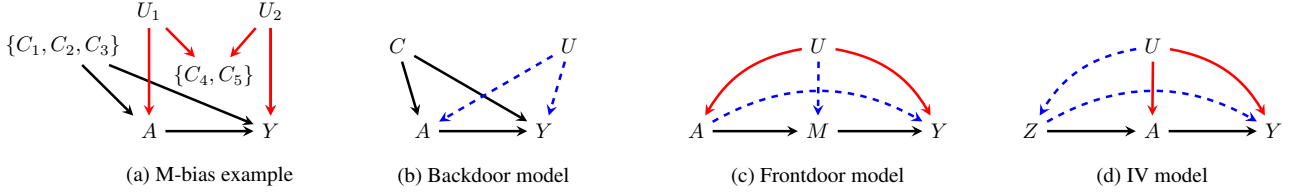
\begin{figure*}[t]
		\begin{center}
			\scalebox{0.8}{
				\begin{tikzpicture}[>=stealth, node distance=2cm]
					\tikzstyle{square} = [draw, thick, minimum size=1.0mm, inner sep=3pt]
					
					\begin{scope}[xshift=0cm]
						\path[->, very thick]
						node[] (a) {$A$} 
						node[right of=a] (y) {$Y$}
						node[above right of=a, xshift=-0.35cm, yshift=-0.5cm] (c45) {$\{C_4, C_5\}$}
                        node[above left of=a] (c123) {$\{C_1, C_2, C_3\}$}
                        node[above of=a, yshift=0cm] (u1) {$U_1$}
                        node[above of=y, yshift=0cm] (u2) {$U_2$}
                        node[below of=a, yshift=1.25cm, xshift=1cm] (label) {(a) M-bias example}
						
						(c123) edge[] (a)
						(a) edge[] (y)
						(c123) edge[] (y)
                        (u1) edge[red] (a)
                        (u1) edge[red] (c45)
                        (u2) edge[red] (y)
                        (u2) edge[red] (c45)
                        (u2) edge[red] (y)
						;
					\end{scope}

                    \begin{scope}[xshift=4.5cm]
						\path[->, very thick]
						node[] (a) {$A$} 
						node[right of=a] (y) {$Y$}
						node[above left of=a, xshift=1cm] (c) {$C$}
                            node[above right of=y, xshift=-1cm] (u) {$U$}
						node[below of=a, yshift=1.25cm, xshift=1cm] (label) {(b) Backdoor model}
						
						(c) edge[] (a)
						(a) edge[] (y)
						(c) edge[] (y)
                        (u) edge[blue, dashed] (y)
                        (u) edge[blue, dashed] (a)
						;
					\end{scope}
					
                    \begin{scope}[xshift=9cm]   
						\path[->, very thick]
						node[] (a) {$A$}
						node[right of=a] (m) {$M$}
						node[right of=m] (y) {$Y$}
                        node[above right of=m, xshift=-1.4cm] (u) {$U$} 
                        node[below of=m, yshift=1.25cm](label) {(c) Frontdoor model}
						
                        (a) edge[] (m)
                        (m) edge[] (y)
                        (u) edge[red, bend right=25] (a)
                        (u) edge[blue, dashed] (m)
                        (u) edge[red, bend left=25] (y)
                        (a) edge[blue, dashed, bend left] (y)
						;
					\end{scope}
     
					\begin{scope} [xshift=14.5cm]
						\path[->, very thick]
						node[] (z) {$Z$}
						node[right of=z] (a) {$A$}
						node[right of=a] (y) {$Y$}
                        node[above right of=a, xshift=-1.4cm] (u) {$U$} 
						node[below of=a, yshift=1.25cm] (label) {(d) IV model}
						
						(z) edge[] (a)
						(a) edge[] (y)
                        (u) edge[red] (a)
                        (u) edge[red, bend left=25] (y)
                        (u) edge[blue, dashed, bend right=25] (z)
                        (z) edge[blue, dashed, bend left] (y)
						;
					\end{scope}
				
				\end{tikzpicture}
			}
		\end{center}
        \vspace{-0.4cm}
		\caption{Causal DAGs used for motivating robust triangulation. (a) A causal DAG where uncertainty about what variables to adjust for may yield M-bias. (b, c, d) Under model uncertainty, analysts may wish to triangulate effects from each of these models that rely on qualitatively distinct assumptions---violations of these assumptions are shown via blue dashed edges.}
		\label{fig:motivation}
\end{figure*}

% Although the candidate causal models need not be graphical in nature, we frame our discussion using causal directed acyclic graphs (causal DAGs), as they provide a transparent way to represent identifying assumptions and their testable implications. Here, we first provide a brief overview of hidden variable causal DAGs and their testable implications.
Although the candidate causal models need not be graphical, we frame our discussion with causal directed acyclic graphs (causal DAGs), as they provide a transparent way to represent identifying assumptions and their testable implications.

The causal model of a DAG $\G(V)$ can be understood as distributions generated by a system of structural equations equipped with the $\doo(\cdot)$ operator \citep{pearl2009causality}. Specifically, for each variable $V_i \in V$, there is an equation of the form $V_i \gets f_i\big(\pa_\G(V_i), \epsilon_i\big),$
where $\pa_\G(V_i)$ denotes a set of values for the parents of $V_i$ in $\G$ and $\epsilon_i$ is an exogenous (latent) error term. A joint distribution $P(V)$ induced by such a system is said to be Markov with respect to $\G$, meaning that it factorizes as $P(V) = \prod_{V_i \in V} P\big(V_i \mid \Pa_\G(V_i)\big)$.  %according to the graph:
% \begin{align*}
% P(V) = \prod_{V_i \in V} P\big(V_i \mid \Pa_\G(V_i)\big).
% \end{align*}
Equivalently, $P$ satisfies the global Markov property with respect to $\G$ stated in terms of the well-known d-separation criterion: $X \ci_{\text{d-sep}} Y \mid Z \implies X \ci Y \mid Z \text{ in } P$ \citep{pearl2009causality}. The distribution $P$ is considered \emph{faithful} to $\G$ if the converse is also true, i.e., $X \ci_{\text{d-sep}} Y \mid Z \iff X \ci Y \mid Z \text{ in } P$.
%Intuitively, faithfulness rules out ``coincidental'' independences that are not due to the structure of $\G$. It is often used as an assumption in the causal discovery literature to enable learning the structure of $\G$ from data \citep{spirtes2000causation}. In a similar vein, we assume faithfulness in our work to justify the use of data-driven weights for triangulating effects based on testability of identifying assumptions in $\G$.
%Faithfulness is often assumed in the causal discovery literature to enable learning the structure of $\G$ from data \citep{spirtes2000causation}. In a similar vein, we assume faithfulness in our work to justify  data-driven weights for triangulating effects based on testability of identifying assumptions in $\G$.
% We later use faithfulness to justify the use of observed data constraints in $P$ to rule out non-identifying edges in $\G$.

In fully observed causal DAG models, counterfactual distributions arising from interventions on a set $A \subset V$, written as $P(V \setminus A \mid \doo(a))$, are identified via a truncated factorization known as the g-formula %. Here, conditional factors corresponding to each $A_i \in A$ are removed
\citep{robins1986new, pearl2009causality}:
\begin{align}
P(V \setminus A \mid \doo(a)) 
= \prod_{V_i \in V \setminus A} 
P\big(V_i \mid \Pa_\G(V_i)\big)\Big|_{A=a}.
\label{eq:g-formula}
\end{align}

% When a subset of variables, say $U$, are unobserved, identification theory is significantly more complicated---in Scenario~2 of our motivating section, we describe some popular methods used to overcome unobserved confounding and discuss why triangulating their estimates may be beneficial.
% Additionally, the constraints implied by the hidden variable DAG $\G(V \cup U)$ on the observed distribution $P(V)$ now consist of not only regular conditional independences that can be read by d-separation, but also generalized equality constraints, known as \emph{Verma constraints} \citep{robins1986new, verma1990equivalence}.  Verma constraints are obtained via d-separation in graphs corresponding to post-intervention distributions identified from the observed joint distribution.
When some variables $U$ are unobserved, identification theory becomes more complex (as in Scenario~2 of Section~\ref{sec:motivation}). In addition, the observed distribution $P(V)$ consists of not only conditional independences from d-separation, but also generalized equality constraints, known as \emph{Verma constraints} \citep{robins1986new, verma1990equivalence}. Verma constraints are obtained via d-separation in conditional DAGs corresponding to identifiable post-intervention distributions.
As an example, consider the DAG $\G(V \cup U)$ in Figure~\ref{fig:verma-graph}(a). By d-separation, $\G$ implies $Z \ci M \mid A, C$. However, there is no set  $X \subset V \setminus \{Z, Y\}$ such that $Z \ci_{\text{d-sep}} Y \mid X$. That is, there is no ordinary conditional independence between $Z$ and $Y$ in the observed joint $P(V)$. However, it is well known that there exists a Verma constraint  $Z \ci Y \mid C$ in the Markov kernel $P(V)/P(M\mid A, Z, C)$ \citep{verma1990equivalence}. %or equivalently, $\sum_a P(Z, C, a)\cdot P(Y|a, Z, C, M)$ is not a function of $Z$.
Under a causal interpretation of $\G(V\cup U)$, for any fixed value of $m$, $P(V)/P(M=m\mid A, Z, C) \vert_{M=m} = P(Z, A, C, Y \mid \doo(m))$ by the g-formula in \eqref{eq:g-formula}. %\footnote{This coincides with dropping a conditional factor of $M$ in the factorization of $\G$, similar to the g-formula in~\eqref{eq:g-formula}.}.
That is, the kernel $P(V)/P(M\mid A, Z, C)$ factorizes according to the conditional DAG  in Figure~\ref{fig:verma-graph}(b), where $M$ is now a fixed node with all incoming edges removed, and from which the Verma constraint can be read using d-separation---notice that $Z$ and $Y$ are d-separated given $C$ in $\G(V\cup U)_{\doo(m)}$.

When the measures of model validity used rely on Verma constraints, we will require $P(V)$ to be \emph{Verma constraint faithful} to $\G(V\cup U)$. That is, $X \ci Y \mid Z$ in $P(V)/P(A | B)$ implies that edges in $\G$ are such that $P(V \setminus A \mid \doo(a)) = P(V)/P(A | B)\vert_{A=a}$   and  $X \ci_{\text{d-sep}} Y \mid Z$ in  $\G_{\doo(a)}$.
%In Section~\ref{sec:applications}, we state exactly what types of faithfulness are assumed.  %Additional details on Verma constraint faithfulness can be found in \cite{shpitser2014introduction} and \cite{bhattacharya2021differentiable}. We now move on to describing two scenarios that motivate our framework.

\section{Motivating Scenarios}
\label{sec:motivation}
Suppose an analyst is interested in estimating the average causal effect  $\theta \equiv \E[Y \mid {\doo(A=1)}] - \E[Y \mid {\doo(A=0)}]$. In this section, we present two classic scenarios where proper triangulation of effect estimates from multiple plausible causal models would lead to more robust causal inference.

\textbf{Scenario 1. Adjust for all pre-treatment covariates or not?} When a set of pre-treatment covariates $L$ blocks all backdoor paths between $A$  and $Y$ we obtain a well-known instance of the g-formula, known as the backdoor formula:
\begin{align}
\!\!\theta = \sum_l P(l) \times \big( \E[Y\mid A=1, l] - \E[Y\mid A=0, l] \big).
\label{eq:backdoor}
\end{align}
A longstanding debate in causal inference is whether analysts should include \emph{all} observed pre-treatment covariates in the backdoor formula~\eqref{eq:backdoor}. Some researchers \citep{rosenbaum2002observational, ding2015adjust, rubin2009should} argue that conditional ignorability (blockage of all backdoor paths) is likely to hold only when adjusting for a large set of variables, and thus advocate adjusting for all pre-treatment covariates $C$ associated with the treatment and outcome. Other researchers emphasize that adjustment for all such variables can induce collider bias, such as M-bias \citep{pearl2009remarks, shrier2009remarks, sjolander2009propensity}. Thus, when the true  DAG is unknown---as is typical in observational studies---the analyst may face significant uncertainty about whether excluding certain covariates risks confounding bias or including them risks collider bias. This setting naturally motivates triangulation across multiple adjustment strategies.

As a concrete example, let Figure~\ref{fig:motivation}(a) be the true (but unknown) DAG. Here, adjusting for all observed pre-treatment covariates $C$  yields invalid effect estimates due to the colliders at $C_4$ and $C_5$. Due to model uncertainty, the analyst might partition the observed covariates into two groups: those that certainly do not give M-bias and are essential for adjustment, and those about which there is some uncertainty. Say they correctly deem $\{C_1, C_2, C_3\}$ to be essential, but are uncertain about $\{C_4, C_5\}$. They may then wish to triangulate estimates from multiple overlapping adjustment sets, e.g., $\{C_1, C_2, C_3\}$ that excludes all uncertain covariates, $\{C_1, C_2, C_3, C_4\}$ that includes some but not others, and $\{C_1, C_2, C_3, C_4, C_5\}$ that includes all pre-treatment covariates. In this example, only the first set leads to valid effect estimates, underscoring the importance of a triangulation procedure that is robust to incorrect identifying assumptions in multiple models as long as at least one is correct.

\textbf{Scenario 2. Use backdoor, frontdoor, or an instrument?}
In many observational settings, there is uncertainty about whether unmeasured confounding between $A$ and $Y$ is present. %If the true (unknown) causal DAG $\G(V \cup U)$ contains an unmeasured confounder of the form $A \leftarrow U \rightarrow Y$, then no subset of observed variables suffices for valid identification via the backdoor formula \eqref{eq:backdoor}.
%In such cases, standard covariate adjustment may yield biased estimates.
A natural response to this is to triangulate estimates obtained from a backdoor adjustment model (Figure~\ref{fig:motivation}(b)) with those from alternative identification strategies that explicitly allow for unmeasured $A$--$Y$ confounding but impose different structural assumptions. One such alternative is the frontdoor model (Figure~\ref{fig:motivation}(c)). The frontdoor model permits unmeasured confounding between $A$ and $Y$, but assumes (i) that $A$ affects $Y$ only through a mediator set $M$ (i.e., no direct $A \rightarrow Y$ edge), and (ii) that the unmeasured variable $U$ does not confound the $A$--$M$ or $M$--$Y$ relationships (i.e., no $U \rightarrow M$ edge). Under these assumptions, the causal effect is identified by the frontdoor formula \citep{pearl1995causal}. A conditional version that adjusts for baseline covariates $L$ is
\begin{align}
\theta = \sum_{m, l} \bigg\{ &\big(\sum_{a'}P(a', l) \cdot \E[Y\mid a', m, l]\big) \nonumber \\
&\cdot\big(P(m\mid A=1, l) - P(m\mid A= 0, l)\big) \bigg\}.
\label{eq:frontdoor}
\end{align}

Alternatively, one may use instrumental variable (IV) models (Figure~\ref{fig:motivation}(d)). An instrument $Z$ typically satisfies three structural assumptions: (i) conditional independence from the unmeasured confounder $U$ (no $U \rightarrow Z$ edge), (ii) an exclusion restriction (no $Z \rightarrow Y$ edge), and (iii) instrument relevance ($Z \rightarrow A$ exists). Under additional non-graphical assumptions, such as effect homogeneity, the causal effect is point-identified as \citep{angrist1996identification}:
\begin{align}
\theta = 
\frac{\sum_l P(l) \cdot \big(\E[Y \mid Z=1, l] - \E[Y \mid Z=0, l])}
{\sum_l P(l) \cdot \big( \E[A \mid Z=1, l] - \E[A \mid Z=0, l] \big)}.
\label{eq:iv}
\end{align}

Unlike Scenario 1, this setting involves three qualitatively distinct causal models. Each has their own unique drawbacks, but may imply testable restrictions in $P(V)$ under some assumptions of faithfulness and causal ordering of variables \citep{entner2013data, bhattacharya2022testability}. A natural approach then would be to test such implications, retain non-rejected models, and aggregate their estimates. %, e.g., via a simple average or variance-weighted average \citep{rubin1975variance}.
However, this two-step strategy introduces a post-selection inference problem, particularly when models share data, have overlapping adjustment sets, or shared sources of bias. Further, standard remedies for this, such as sample splitting \citep{hansen2000sample, newey2018cross}, can substantially reduce effective sample size when multiple models are considered. These scenarios and associated challenges motivate our new triangulation functional and inference procedure.

% \section{Data-Driven Weighted Combination Of Causal Effects}
\section{New Triangulation Method}
\label{sec:method}

We now present the general setup that we consider. Let $\theta$ denote the causal parameter of interest, such as the average or conditional average causal effect. We assume the observed data consists of $n$ IID realizations $O_1, \dots, O_n$ drawn from some unknown distribution $P$. For a finite integer $K>1$, let ${\cal M}_1 \dots, {\cal M}_K$ denote the different candidate causal models, and let $\psi_{k}$ (we suppress the dependence on $P$ for these parameters and ones that follow for notational brevity) denote the identifying functional of ${\cal M}_k$. That is, if the assumptions of ${\cal M}_k$ are true, then $\theta = \psi_k$. Let $\beta_1, \dots, \beta_K$ denote observed data parameters and ${\cal A}$ be a set of assumptions %\footnote{${\cal A}$ can also  be split up into multiple sets ${\cal A}_1, \dots, {\cal A}_k$. %, and this would not change the proposed framework.
%However, we feel it is easier for an analyst to specify a single set of joint assumptions for testing correctness of all models, rather than individual sets for each one. We elaborate on this in Section~\ref{sec:applications}.}
such that if ${\cal A}$ is true and $\beta_{k} = 0$, then the identifying assumptions of ${\cal M}_k$ are true. We do not require, however, that if ${\cal A}$ holds and  $\beta_k \not= 0$, then the assumptions of ${\cal M}_k$ must be incorrect; we allow this scenario to also correspond to the model ${\cal M}_k$ being untestable using the observed data. 

% The interpretation of each $\beta_k$ is as follows: each $\beta_k$ is an observed data parameter---for example, a regression coefficient in a parametric model, a log-odds ratio in a semiparametric model \citep{chen2007semiparametric, tchetgen2010doubly}, or a generalized covariance measure \citep{zhang2017causal, shah2020hardness, he2025on}---that can be used to test an equality constraint in $P$. These equality constraints could be regular conditional independence constraints or generalized equality constraints, such as Verma constraints \citep{robins1986new, verma1990equivalence} and tetrad constraints \citep{shafer1995generalization, silva2006learning}. The equality constraint holds in $P$ if and only if $\beta_k=0$. Thus, if $\beta_k = 0$, then under assumptions ${\cal A}$, one of which will usually be an assumption like faithfulness \citep{spirtes2000causation},   one or more specific edges in a causal graph are absent. Furthermore, absence of these specific edges is a necessary condition of the identifying assumptions of ${\cal M}_k$. Examples of such tests proposed in the causal discovery literature include independence constraints in \cite{entner2013data} for testing the backdoor model, Verma constraints in \cite{bhattacharya2022testability} for testing the front-door model, and tetrad constraints in \cite{xie2024automating} for testing the proximal causal inference model.

The role of each $\beta_k$ is to encode a testable implication of the identifying assumptions of model ${\cal M}_k$. Examples of such testable implications from the causal discovery literature include conditional independence constraints for backdoor models \citep{entner2013data}, Verma constraints for frontdoor models \citep{bhattacharya2022testability}, and tetrad constraints for proxy-based models \citep{xie2024automating}. That is, each $\beta_k$ is an observed data parameter constructed so that it is zero if and only if a particular equality constraint, such as the aforementioned ones, holds in the observed distribution $P$. More specifically, $\beta_k$ may be a regression coefficient in a parametric model, a log-odds ratio in a semiparametric model \citep{chen2007semiparametric}, or a generalized covariance measure \citep{shah2020hardness, he2025on, bergen2026the}. Under assumptions ${\cal A}$ (typically including a faithfulness condition and partial knowledge of temporal ordering of variables), $\beta_k = 0$ implies the absence of specific edges in $\G$, and  absence of these edges is sufficient for the identifying assumptions of ${\cal M}_k$ to hold. Thus, whether $\beta_k = 0$ provides empirical evidence about the validity of ${\cal M}_k$.

\subsection{Triangulation functional}
The above setup motivates the following naive triangulation functional: $\psi_\text{naive} = \frac{\sum_{k=1}^{K} \mathbb{I}(\beta_k = 0) \cdot \psi_{k}}{\sum_{j=1}^{K} \mathbb{I}(\beta_j = 0)}$,
% This setup motivates a naive triangulation functional:
% \begin{align}
% \psi_\text{naive} = \frac{\sum_{k=1}^{K} \mathbb{I}(\beta_k = 0) \cdot \psi_{k}}{\sum_{j=1}^{K} \mathbb{I}(\beta_j = 0)},
% \label{eq:naive_triangulation}
% \end{align}
where $\mathbb{I}(\cdot)$ is the indicator function. That is, $\psi_{\text{naive}}$ is an average over all $\psi_k$ derived from models ${\cal M}_k$ for which $\beta_k = 0$. %If assumptions ${\cal A}$ used to justify tests of model correctness hold, then $\psi_{\text{naive}}$ has the following \emph{causal robustness} property: $\psi_{\text{naive}} = \theta$ if  the identifying assumptions of at least one model ${\cal M}_k$ are correct and testable from the observed data distribution $P$. This is because  $\psi_{\text{naive}}$ can then be interpreted as averaging effect estimates from models where $\theta=\psi_k$.
If assumptions ${\cal A}$ used to test model correctness hold, $\psi_{\text{naive}}$ is \emph{causally robust}: $\psi_{\text{naive}}=\theta$ if at least one model is correct and testable, as it reduces to an average of only those $\psi_k$ such that $\psi_k=\theta$.

% \begin{figure}[t]
%     \centering
%     \includegraphics[scale=0.4]{figures/dirac_delta.png} % 
%     \vspace{-0.3cm}
%     \caption{Gaussian kernel approximation of $\mathbb{I}(\beta_k = 0)$.}
%     \label{fig:dirac-delta}
% \end{figure}

We consider this to be naive as filtering based on $\mathbb{I}(\beta_k = 0)$ is unlikely to succeed for any model ${\cal M}_k$ when $\beta_k$ is unknown and must be estimated from finite samples. %\footnote{In some settings, $\beta_k=0$ because the assumptions of ${\cal M}_k$ hold by design; our method can be easily adapted to such settings.}%\footnote{It may be possible in some settings to argue that $\beta_k = 0$ if the assumptions of ${\cal M}_k$ hold by design. Our method naturally allows for such settings by using the known $\beta_k$ rather than estimating it.} 
Instead, we propose  a smooth approximation of the  naive functional by replacing $\mathbb{I}(\beta_k = 0)$ with Gaussian kernels of the form $\delta_a(\beta_k) = \frac{1}{|a|\sqrt{\pi}} e^{-\left(\beta_k / a\right)^2}$,
% \begin{align}
% \delta_a(\beta_k) = \frac{1}{|a|\sqrt{\pi}} e^{-\left(\beta_k / a\right)^2},
% \label{eq:dirac-delta}
% \end{align}
where $a > 0$ is a constant that controls the sharpness of the approximation. As \( a \to 0 \), the function becomes  concentrated around \( \beta_k = 0 \). %, mimicking the behavior of an indicator function
%; see Figure~\ref{fig:dirac-delta}.
Our proposed triangulation functional is then given by
\begin{align}
{\psi} = \sum_{k=1}^{K} w_k \psi_{k}, \quad \text{where} \quad w_k = \frac{\delta_a(\beta_k)}{\sum_{j=1}^K \delta_a(\beta_j)}.
\label{eq:triangulation_rule}
\end{align}
This functional exhibits an \emph{approximate} causal robustness property due to smoothing via the kernels, as stated below.
\begin{theorem}
    Suppose assumptions ${\cal A}$ used to justify the tests of model correctness hold, and let $\mathcal{C} \subseteq \{1, \dotsc, K\}$ and $\mathcal{I} = \{1, \dotsc, K\} \setminus \mathcal{C}$ be the subsets of indices $k$ such that model $\mathcal{M}_k$ is correct and incorrect, respectively. Then
    \begin{equation}
        |\psi - \theta| \leq   \frac{ \max_k |\psi_k - \theta|}{1 + D_a}
        \label{eq:bound}
    \end{equation}
    where $D_a = \left(\sum_{k \in \cal{C}} \delta_a(\beta_k)\right) / \left(\sum_{k \in \cal{I}} \delta_a(\beta_k)\right)$. Further, if at least one model ${\cal M}_k$ is correct and testable using the observed data, then $D_a \geq e^{\varepsilon^2/ a^2} / |\cal{I}|$, where $\varepsilon = \min_{k \in \cal{I}} |\beta_k|$.
    \label{thm:robustness}
\end{theorem}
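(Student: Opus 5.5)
The plan is to use that $\psi$ is a convex combination of the $\psi_k$ and that correct models contribute no bias. Since $\sum_k w_k = 1$, we can write $\psi - \theta = \sum_{k=1}^K w_k(\psi_k - \theta)$. For $k \in \mathcal{C}$ the identifying assumptions of $\mathcal{M}_k$ hold, so $\psi_k = \theta$ and these terms vanish. Applying the triangle inequality to the rest gives
\begin{equation*}
|\psi - \theta| \le \Big(\sum_{k\in\mathcal{I}} w_k\Big)\max_{k}|\psi_k - \theta|.
\end{equation*}
It remains to identify the total weight on incorrect models. Write $S_{\mathcal{C}} = \sum_{k\in\mathcal{C}}\delta_a(\beta_k)$ and $S_{\mathcal{I}} = \sum_{k\in\mathcal{I}}\delta_a(\beta_k)$. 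Then
\begin{equation*}
\sum_{k\in\mathcal{I}} w_k = \frac{S_{\mathcal{I}}}{S_{\mathcal{C}} + S_{\mathcal{I}}} = \frac{1}{1+D_a},
\end{equation*}
which yields \eqref{eq:bound}. Two degenerate cases need a brief remark. If $\mathcal{I}=\emptyset$, then $\psi=\theta$ and the bound holds trivially, with $D_a=\infty$. If $\mathcal{C}=\emptyset$, then $D_a=0$ and the bound reduces to the convexity bound. The kernels are strictly positive, so no denominators vanish.

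For the second claim, suppose some $k^\ast\in\mathcal{C}$ is correct and testable. I will read ``testable'' as meaning that, under $\mathcal{A}$, $\beta_{k^\ast}=0$. Then $S_{\mathcal{C}} \ge \delta_a(0) = 1/(a\sqrt{\pi})$, since every term is positive.

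For the incorrect models I need $\beta_k\neq 0$ whenever $k\in\mathcal{I}$. This follows from the setup: under $\mathcal{A}$, $\beta_k=0$ implies the assumptions of $\mathcal{M}_k$ hold, so the contrapositive gives $\beta_k\ne 0$ for incorrect models. Hence $\varepsilon=\min_{k\in\mathcal{I}}|\beta_k|>0$. Each $k\in\mathcal{I}$ then satisfies $\delta_a(\beta_k)\le \frac{1}{a\sqrt{\pi}}e^{-\varepsilon^2/a^2}$, so
\begin{equation*}
S_{\mathcal{I}} \le |\mathcal{I}|\,\frac{1}{a\sqrt{\pi}}\,e^{-\varepsilon^2/a^2}.
\end{equation*}
Dividing the two bounds cancels the normalizing constant and gives $D_a \ge e^{\varepsilon^2/a^2}/|\mathcal{I}|$.

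The computations are routine. The only step needing care is the interpretive one: ``correct and testable'' must translate into $\beta_{k^\ast}=0$, and membership in $\mathcal{I}$ must translate into $\beta_k\neq0$. The second follows from the contrapositive of the defining property of $\mathcal{A}$ and $\beta_k$. The first is essentially the definition of testability, which includes faithfulness, since that is what makes the testable constraint actually hold at zero. I will state this reading explicitly at the start of the proof.
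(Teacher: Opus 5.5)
Your proposal is correct and matches the paper's proof essentially step for step: you drop the correct-model terms using $\psi_k=\theta$, bound the remainder by the incorrect-model weight mass $S_{\mathcal{I}}/(S_{\mathcal{C}}+S_{\mathcal{I}})=1/(1+D_a)$, and then bound $D_a$ via the $\beta_{k^\ast}=0$ term in the numerator and kernel monotonicity in the denominator. One minor note: the inequality $D_a\ge e^{\varepsilon^2/a^2}/|\mathcal{I}|$ holds by monotonicity even without $\varepsilon>0$, so your contrapositive step is only needed for the follow-up remark that $D_a\to\infty$ as $a\to 0$.
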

\vspace{-1em}
The proof is in Appendix~\ref{app:theorem-proof}. Theorem~\ref{thm:robustness} demonstrates that the absolute difference between the triangulation functional $\psi$ and the true causal parameter of interest $\theta$ is bounded by the maximal bias of the functionals $\psi_k$ from the incorrect causal models divided by  a ``discrimination factor" $1 + D_a$. The discrimination factor depends both on the weighting function chosen and the true values of the testing functionals $\beta_1, \dotsc, \beta_K$. Roughly speaking, when the $\beta_k$'s from correct causal models are small in magnitude compared to those from incorrect causal models on the scale determined by $\delta_a$, then the discrimination factor is large, and the absolute difference between $\psi$ and $\theta$ is reduced. We note that the second statement of Theorem~\ref{thm:robustness} implies that if at least one model ${\cal M}_k$ is both correct and testable using the observed data and all testing functionals $\beta_k$ in incorrect models are non-zero, then $\lim_{a \to 0} D_a = \infty$, so that $\lim_{a \to 0} \psi = \theta$.

To make the robustness property more concrete, consider setting the Gaussian kernel parameter to $a=0.1$ and suppose there is one correct model $\mathcal{C}=\{1\}$ and two incorrect models $\mathcal{I}=\{2,3\}$. Let $\varepsilon=\min_{k\in\mathcal{I}}|\beta_k|=0.2$, indicating that violations of the incorrect models are only weakly detectable since the value of $\beta_k$ is close to $0$ even when ${\cal M}_k$ is incorrect. From Theorem~\ref{thm:robustness}, this yields $1+D_a \geq 1+e^{(0.2/0.1)^2}/2 \approx 30$, implying that the bias of the triangulation functional is at most the maximal bias among incorrect models divided by $30$.

While smaller values of $a$ reduce the absolute difference between the triangulated functional $\psi$ and the target causal effect $\theta$, in practice we cannot set $a$ too small because we only have estimates of $\psi_k$ and $\beta_k$. Thus, we must balance the variability of these estimators against the bias induced by $a > 0$. We return to this discussion and provide concrete recommendations for setting $a$ in Section~\ref{subsec:kernel}.

We also note that if \emph{no} candidate model is both correct and testable, then the triangulation functional does not converge to $\theta$ as $a\to 0$. %ill-defined.
Fortunately, this behavior is straightforward to diagnose in practice: the estimated values of $\psi$ often diverge  due to $\sum_{j=1}^K \delta_a(\beta_j) \approx 0$. The normalizing term, however, also poses challenges in finite samples even when a correct and testable model ${\cal M}_k$ exists in theory, e.g., due to sampling variability. We briefly address this issue of numerical stability in the following subsection on inference.

\subsection{Inference Procedure}
\label{subsec:estimation}

We now discuss our approach to estimation and inference for the triangulation functional $\psi$.  Suppose that for each $k$, we have estimators $\psi_{k, n}$ and $\beta_{k, n}$ of each identifying functional $\psi_k$ and each measure of model correctness $\beta_k$, respectively. %That is, $\text{plim}_{n\to\infty}\psi_{k, n} = \psi_k$ and $\text{plim}_{n\to\infty}\beta_{k, n} = \beta_k$.
Our triangulation estimator is then
{
\begin{align}
\!\!{\psi_n} = \sum_{k=1}^{K} w_{k, n} \psi_{k, n}, \  \text{for } \ w_{k, n} = \frac{\delta_a(\beta_{k, n})}{\lambda_n + \sum_j \delta_a(\beta_{j, n})},
\label{eq:triangulation_estimator}
\end{align}}%
where $\lambda_n > 0$ is an additional term used to ensure numerical stability of the weights $w_{k, n}$ even when the normalizing function in the denominator is close to zero. To avoid affecting first-order asymptotics of the estimator, we require $\lambda_n = o(n^{-1/2})$. A simple choice is to fix $\lambda_n = 1/n$. Note that $\lambda_n$ in the triangulation functional is optional. In practice, we find in our simulations and data application that we do not need it and that using the well-known log-sum-exp trick \citep{blanchard2021accurately}  to compute the normalizing function is sufficient to prevent numerical instability. However, we keep it in our exposition for the sake of completeness.

Since $\psi_{1, n}, \dots, \psi_{K, n}, \beta_{1, n}, \dots, \beta_{K, n}$ are constructed using a common observed dataset, they are typically dependent, even asymptotically. We suppose that under a set of statistical regularity conditions ${\cal S}$ (e.g., rates of convergence and complexity constraints on the nuisance estimators), we can construct asymptotically linear estimators of both $\beta_k$ and  $\psi_{k}$ with influence functions $\phi_{\beta_k}$ and $\phi_{\psi_k}$, respectively, for each $k$. That is, under ${\cal S}$ we have $\beta_{k, n} - \beta_k  =  \frac{1}{n} \sum_{i=1}^{n}\phi_{\beta_k}(O_i) + o_p(n^{-1/2})$ and $\psi_{k, n} - \psi_k  =  \frac{1}{n} \sum_{i=1}^{n} \phi_{\psi_k}(O_i) + o_p(n^{-1/2})$.
% \begin{align*}
% &\beta_{k, n} - \beta_k  =  \frac{1}{n} \sum_{i=1}^{n}\phi_{\beta_k}(O_i) + o_p(n^{-1/2}), \\
% &\psi_{k, n} - \psi_k  =  \frac{1}{n} \sum_{i=1}^{n} \phi_{\psi_k}(O_i) + o_p(n^{-1/2}),
% \end{align*}
Here $\phi_{\beta_k}$ and $\phi_{\psi_k}$ are assumed to satisfy  $\E[\phi_{\beta_k}] = \E[\phi_{\psi_k}] = 0$,  $\E[\phi^2_{\beta_k}] < \infty$, and $\E[\phi^2_{\psi_k}] < \infty$. These estimators could be parametric or semiparametric in nature; we will discuss specific approaches to estimation more below. %We will discuss approaches to constructing asymptotically linear estimators below.

Asymptotic linearity of any finite collection of estimators implies  joint convergence to a multivariate normal distribution \citep{van2000asymptotic}. Let $\kappa = [\beta_1, \dots, \beta_K, \psi_1, \dots, \psi_K]$ and $\kappa_n$ denote the corresponding vector of estimates. Then, $n^{1/2} (\kappa_n - \kappa) \xrightarrow{d} N(0, \Sigma)$, where $\Sigma$ is a $2K \times 2K$ covariance matrix of the influence functions of each $\psi_k$ and $\beta_k$. Each entry of $\Sigma$ is of the form $\E[\phi_{\beta_j}, \phi_{\beta_k}], \E[\phi_{\psi_j}, \phi_{\psi_k}]$, or $\E[\phi_{\beta_j}, \phi_{\psi_k}]$.
% {\footnotesize
% \begin{align*}
% \begin{bmatrix}
%     \E(\phi_{\beta_1}^2) & \E(\phi_{\beta_1}, \phi_{\beta_2}) & \dots & \E(\phi_{\beta_1}, \phi_{\psi_K}) \\
%     \E(\phi_{\beta_2}, \phi_{\beta_1}) & \E(\phi_{\beta_2}^2) & \dots & \E(\phi_{\beta_2}, \phi_{\psi_K}) \\
%     \vdots & \vdots & \ddots & \vdots \\
%     \E(\phi_{\psi_K}, \phi_{\beta_1}) & \E(\phi_{\psi_K}, \phi_{\beta_2}) &\dots & \E(\phi_{\psi_K}^2)
% \end{bmatrix}.
% \end{align*}}%
% \begin{align*}
% n^{1/2} (\kappa_n - \kappa) \xrightarrow{d} N(0, \Sigma),
% \end{align*}
Applying the delta method then gives the asymptotic distribution of the triangulation estimator as,
\begin{align}
n^{1/2}(\psi_n - \psi) \xrightarrow{d} N(0, \gamma^T\Sigma\gamma),
\label{eq:convergence}
\end{align}
where $\gamma$ is a vector of partial derivatives defined as,
\begin{align}
\gamma = \bigg[\ \frac{\partial \psi}{\partial \beta_1}, \dots,  \frac{\partial \psi}{\partial \beta_K}, \frac{\partial \psi}{\partial \psi_1}, \dots,  \frac{\partial \psi}{\partial \psi_K} \ \bigg].
\label{eq:gamma}
\end{align}
The following lemma also gives us closed form expressions for computing the vector $\gamma$. The proof is in Appendix~\ref{app:partials}.
\begin{lemma}
\label{lem:partials}
The partial derivatives of $\psi$ are  %$\frac{\partial {\psi}}{\partial \psi_k} = w_k$ and
{
\begin{align*}
\frac{\partial {\psi}}{\partial \psi_k} = w_k \quad \text{ and } \quad
\frac{\partial \psi}{\partial \beta_k} = %\frac{2\psi_k\beta_k w_k }{a^2} \left( \sum_{j\not=k} w_j  - \frac{\lambda_n + \sum_{j\not=k}\delta(\beta_j)  }{ \lambda_n + \sum_{j=1}^K \delta(\beta_j) }\right).
\frac{2\beta_k w_k}{a^2} \left(\psi - \psi_k \right).
\end{align*}}
\end{lemma}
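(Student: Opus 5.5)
Since $\psi = \sum_k w_k \psi_k$ and the weights $w_k$ depend only on $\beta_1,\dots,\beta_K$ and not on $\psi_1,\dots,\psi_K$, the derivative in $\psi_k$ is immediate: $\partial\psi/\partial\psi_k = w_k$. The substance is the derivative in $\beta_k$. I will carry this out with the denominator written as $S = \lambda + \sum_j \delta_a(\beta_j)$. This covers both the functional in \eqref{eq:triangulation_rule} ($\lambda=0$) and the stabilized version in \eqref{eq:triangulation_estimator} ($\lambda=\lambda_n$), since $\lambda$ does not depend on $\beta_k$.

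First, I differentiate the Gaussian kernel. From $\delta_a(\beta)=\frac{1}{|a|\sqrt\pi}e^{-(\beta/a)^2}$ we get
\begin{equation*}
\delta_a'(\beta) = -\frac{2\beta}{a^2}\,\delta_a(\beta).
\end{equation*}
Writing $\psi = \frac{1}{S}\sum_j \delta_a(\beta_j)\psi_j$, the variable $\beta_k$ enters both the numerator (only through the $j=k$ term) and $S$. I will then apply the quotient rule:
\begin{equation*}
\frac{\partial\psi}{\partial\beta_k} = \frac{\delta_a'(\beta_k)\psi_k}{S} - \frac{\delta_a'(\beta_k)}{S}\cdot\frac{\sum_j\delta_a(\beta_j)\psi_j}{S}.
\end{equation*}

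Next, I recognize the second fraction $\frac{\sum_j\delta_a(\beta_j)\psi_j}{S}$ as $\psi$ itself. Substituting the kernel derivative and using $\delta_a(\beta_k)/S = w_k$ then gives
\begin{equation*}
-\frac{2\beta_k w_k}{a^2}\,(\psi_k-\psi) = \frac{2\beta_k w_k}{a^2}\,(\psi-\psi_k),
\end{equation*}
which is the stated formula.

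There is no real obstacle here. The only point needing care is to treat $\psi$ as a function of the $2K$ arguments in $\kappa$ and to hold the other coordinates fixed. In particular, one should not differentiate through the $w_j$ for $j\neq k$ separately; doing so produces the equivalent but messier expression $\sum_{j\neq k} w_j - \ldots$, which collapses to the same result. I would also note that $|a|=a>0$, so the kernel is smooth everywhere.
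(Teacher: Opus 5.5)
Your proof is correct and is essentially the paper's computation (the quotient rule together with $\delta_a'(\beta)=-\tfrac{2\beta}{a^2}\delta_a(\beta)$, valid for any $\lambda\ge 0$), just organized more economically. The paper differentiates each weight $w_i$ separately, splits into the cases $i=k$ and $i\neq k$, and then recombines using $\frac{\lambda_n+\sum_{j\neq k}\delta_a(\beta_j)}{\lambda_n+\sum_{j}\delta_a(\beta_j)}+w_k=1$; your single application of the quotient rule to $S^{-1}\sum_j\delta_a(\beta_j)\psi_j$, recognized as $\psi$, avoids that case split, and the termwise route you caution against is perfectly valid (it is the paper's route), only longer.
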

% The proof of Lemma~\ref{lem:partials} can be found in Appendix~\ref{app:partials}.

We now suggest three  approaches to constructing asymptotically linear estimators and associated approaches to valid inference.  Our preferred  strategy is influence function-based estimation. In this approach, the %efficient
influence functions $\phi_{\beta_k}$ and $\phi_{\psi_k}$ are derived explicitly and used to construct $\beta_{k,n}$ and $\psi_{k,n}$ using, e.g., the one-step construction \citep{bickel1982adaptive} or estimating equations \citep{chernozhukov2018double}. %Alternatively, influence functions can be computed numerically.
A benefit to this approach is that machine learning estimators can be used for nuisance functions, and asymptotic linearity follows under sufficient rates of convergence of these nuisance estimators. %(in addition to complexity constraints, which can be avoided using cross-fitting).

When the influence functions are available in closed form---several of which have been derived in the context of causal graphs \cite{jung2021estimating, bhattacharya2022semiparametric, guo2024average}---we can  construct consistent estimators $\phi_{\beta_k,n}$ and $\phi_{\psi_k, n}$ of the influence functions. A consistent estimator $\Sigma_n$ of  $\Sigma$ is then simply the sample covariance matrix of the influence function estimators. %For example, we estimate $\E[\phi_{\beta_j}\phi_{\beta_k}]$ with $\frac{1}{n}\sum_{i=1}^n \phi_{\beta_k,n}(O_i) \phi_{\beta_j, n}(O_i)$.
An estimator $\gamma_n$ of $\gamma$ can be obtained by plugging in $\psi_{k,n}$ and $\beta_{k,n}$ into the form of $\gamma$ provided in Lemma~\ref{lem:partials}. A variance estimator of $\psi_n$ is then given by $\widehat{\text{var}}(\psi_n) = \gamma_n^T \Sigma_n \gamma_n / n$, which can be used to construct Wald-type confidence intervals for $\psi$. %We also note that if influence function estimators are not available a partial bootstrap can be used instead. %The final estimate of the variance is then obtained by computing empirical estimates of $\gamma_n$ and $\Sigma_n$ from the data. That is,
% \begin{align}
%     \widehat{\text{var}}(\psi_n) = \gamma_n^T \Sigma_n \gamma_n.
%     \label{eq:variance_estimator}
% \end{align}

For some functionals $\psi_k$ or $\beta_k$, explicit influence functions might not be available in the current literature. In this case, one option is to use plug-in estimators $\psi_{k,n}$ and $\beta_{k,n}$. However, plug-in estimators are often only asymptotically linear when correctly-specified parametric models are employed for nuisance estimators. In this case, the empirical bootstrap yields asymptotically valid inference under mild smoothness conditions on the parametric models \citep{efron1994introduction}. Specifically, on each bootstrap sample, the parametric nuisance estimators are re-estimated and plugged in to obtain bootstrap estimates $\psi_{k,n}^*$ and $\beta_{k,n}^*$. These estimates are then substituted as in~\eqref{eq:triangulation_estimator} to obtain a bootstrap estimate $\psi_n^*$. The distribution of bootstrap estimates can then be used to construct a confidence interval in the standard ways.

Finally, if plug-in estimators with nuisance estimators based on machine learning are used, the resulting estimators may not be asymptotically linear due to excess bias, and hence the empirical  bootstrap is not guaranteed to work. In this case, %the $m$-out-of-$n$ bootstrap, also known as 
\emph{subsampling} may still be asymptotically valid \citep{politis2001asymptotic}. Subsampling involves drawing $b$ random subsamples of size $m < n$ %\textcolor{red}{with replacement}
from the full dataset, computing estimates $\psi_m^{(1)}, \dots, \psi_m^{(b)}$ for each subsample, and constructing a $(1-\alpha)$ confidence interval based on the $\alpha/2$ and $1-\alpha/2$ quantiles of these estimates. A sufficient condition for this procedure to yield valid inference is that $\tau_n(\psi_n-\psi)$ converges to a non-degenerate distribution, which can be true even when the estimators are not asymptotically linear.

%Thus, as long as $\gamma^T\Sigma\gamma > 0$, we obtain a sufficient condition for valid inference via subsampling---convergence of $n^{1/2}(\psi_n-\psi)$ to a non-degenerate normal distribution.

%implies the following marginal convergence result for each $\psi_{k, n}$ and $\beta_{k, n}$ \citep{van2000asymptotic},
%\begin{align*}
%&n^{1/2} (\beta_{k, n} - \beta_k) \xrightarrow{d} N(0, \sigma^2_{\beta_k}), \text{ and } \\
%&n^{1/2} (\psi_{k, n} - \psi_k) \xrightarrow{d} N(0, \sigma^2_{\psi_k}).
%\end{align*}%
%In addition, asymptotic linearity 

%Since the IF based estimators are also asymptotically linear, the convergence result in~\eqref{eq:convergence} still holds. However, now the covariance matrix 

% In summary (see also the box below), for inference of $\psi$ in \eqref{eq:triangulation_rule}, influence function-based estimators are preferred, followed by plug-in estimators with parametric nuisance models, as subsampling intervals tend to be overly conservative due to discarding substantial data (typically $m = n^{4/5}$).
We summarize our recommendations below. The order in the if-elif-else logic reflects our preferred inference strategies.

\begin{procedurebox}[Inference Procedure for Triangulation]

\begin{enumerate}[leftmargin=*, itemsep=4pt]

\item Construct estimators $\psi_{k,n}$ and $\beta_{k,n}$ for each model ${\cal M}_k$ and obtain the point estimate $\psi_n$ using \eqref{eq:triangulation_estimator}.

% \item Obtain the point estimate $\psi_n$ using \eqref{eq:triangulation_estimator}.

\item If every $\psi_{k,n}$ and $\beta_{k,n}$ is influence-function-based, %proceed to Step 3; otherwise, proceed to Step 4.
let $\widehat{\mathrm{SE}}(\psi_n) =
\sqrt{\gamma_n^\top \Sigma_n \gamma_n/n}$ and 
construct a $(1-\alpha)$ CI as $\psi_n \pm z_{1-\alpha/2}\,\widehat{\mathrm{SE}}(\psi_n)$.
%\item Else, if $\psi_{k, n}$ and $\beta_{k, n}$ are plug-in estimators based on parametric nuisance models: Construct a $(1-\alpha)$ CI as $[q_{\alpha/2},\, q_{1-\alpha/2}]$, the empirical quantiles of estimates $\{\psi_n^{(1)},\dots,\psi_n^{(b)}\}$ obtained via empirical bootstrap and re-fitting of nuisance models.
\item Else, if $\psi_{k, n}$ and $\beta_{k, n}$ are plug-in estimators based on parametric nuisance models: Construct a $(1-\alpha)$ CI as $[2\psi_n - q_{1-\alpha/2},\, 2\psi_n - q_{\alpha/2}]$, where $q_{p}$ is the $p^{\text{th}}$ empirical quantile of estimates $\{\psi_n^{(1)},\dots,\psi_n^{(b)}\}$ obtained via empirical bootstrap and re-fitting of nuisance models.

\item Else, construct a $(1-\alpha)$ CI as $[2\psi_n - q_{1-\alpha/2},\, 2\psi_n - q_{\alpha/2}]$, based on the empirical quantiles of estimates $\{\psi_m^{(1)},\dots,\psi_m^{(b)}\}$ obtained via subsampling with $m=n^{4/5}$ and re-fitting of nuisance models.

\end{enumerate}

\end{procedurebox}

\subsection{Setting the kernel bandwidth}
\label{subsec:kernel}
Finally, we address setting the  kernel bandwidth parameter $a$. Our recommendations in this section apply when estimators that converge at the rate $n^{-1/2}$, which includes influence function-based estimators and plug-in estimators based on parametric nuisance models (options 2 and 3 of the display in Section~\ref{subsec:estimation}). Setting $a$ when using estimators with slowers rates of convergence, such as many plug-in estimators based on flexible nuisance models, is an interesting problem left to future work.

We recommend choosing $a$ such that (1) $n^{1/2} a \to \infty$ and (2) $a \sqrt{\log(n)} \to 0$. Examples of this include setting $a=n^{-1/3}$ and $a=1/\log(n)$. In our numerical experiments and data application we choose $a=n^{-1/3}$, and also test this setting against other kernel bandwidths that lie outside of our recommended range. Our recommendation is based on the following theoretical analysis.%; we also test our recommendation empirically in Section~\ref{sec:applications}.}

From Theorem~\ref{thm:robustness} and  Section~\ref{subsec:estimation}, the bias of the triangulation estimator $\psi_n$ for the true parameter $\theta$ decays at a rate proportional to $e^{-1/a^2} + o_P(n^{-1/2})$, while its standard deviation (SD) goes to zero at rate $n^{-1/2}$ when using influence function-based or parametric estimators. If $a \sqrt{\log(n)} \to 0$ (i.e., $a$ goes to zero faster than $1/\sqrt{\log(n)}$), then $e^{-1/a^2} = o(1/\sqrt{n})$, so that the bias goes to zero faster than the SD. Hence, as long as at least one model is correct and testable and regularity conditions for nuisance estimation hold, then our methods produce asymptotically valid inference for the true causal effect if  $a \sqrt{\log(n)} \to 0$.

Now let ${\cal C}'$ denote the set of indices of all models ${\cal M}_k$ that are both correct and testable. When the estimators $\beta_{k, n}$ are asymptotically linear, then $\sqrt{n}\beta_{k, n}$ is asymptotically normal if model $k$ is correct and testable, and $c_n \beta_{k, n} \to_p \infty$ for any sequence $c_n \to \infty$ otherwise. This implies the following: (i) If $a$ goes to zero faster than than $n^{-1/2}$, then asymptotically $\psi_n$ is equal to $\psi_{k, n}$ for some $k$ randomly selected from ${\cal C}'$; (ii) If $a$ goes to zero at the rate $n^{-1/2}$, then asymptotically $\psi_n$ is a random weighted combination of all $\psi_{k, n}$ for $k \in {\cal C}'$, with the joint distribution of weights depending on the asymptotic covariance of $\sqrt{n}\beta_{k, n}$ for $k \in {\cal C}'$; (iii) If $a$ goes to zero slower than $n^{-1/2}$, then asymptotically $\psi_n$ is an equally-weighted average of all $\psi_{k, n}$ for $k \in {\cal C}'$.

In our view, scenario (iii) is the most desirable behavior. Asymptotically, scenario (i) will result in focusing exclusively on the model whose testing functional estimator is closest to zero in any particular sample, which is overly narrow. The asymptotic covariance of the testing functionals is not necessarily related to the asymptotic variance of $\psi_{k, n}$, so there is not a good reason to weight estimators as in scenario (ii). Thus, (iii) arguably reflects the most reasonable behavior. It follows that we recommend choosing $a$ such that $n^{1/2} a \to \infty$ (i.e., $a$ goes to zero slower than $n^{-1/2}$).

% \textcolor{purple}{Further, in practice we recommend taking $a$ such that $n^{1/2} a \to \infty$ (i.e., $a$ goes to zero slower than $n^{-1/2}$) due to the following. Let $S$ be the set of correct and testable models. If the $\hat\beta_k$ estimators are asymptotically linear, then $n^{1/2}\hat\beta_k$ is asymptotically normal if model $k$ is correct and testable, and $c_n \hat\beta_k \to_p \infty$ for any sequence $c_n \to \infty$ otherwise. Hence:}

% Hence, our final recommendation is to choose $a$ satisfying 1) $n^{1/2} a \to \infty$ and 2) $a \sqrt{\log(n)} \to 0$.

\section{Applying The Triangulation Procedure In Practice}
\label{sec:applications}

\begin{figure}[t]
	\begin{center}
		\scalebox{0.9}{
			\begin{tikzpicture}[>=stealth, node distance=1.25cm]
				\tikzstyle{square} = [draw, thick, minimum size=1.0mm, inner sep=3pt]
				\tikzstyle{format} = [thick, circle, minimum size=1.0mm, inner sep=3pt]
				
				\begin{scope}[xshift=0cm]
                \path[->, very thick]
                node[] (a) {$A$}
                node[left of=a] (z) {$Z$}
                node[right of=a, xshift=0.3cm] (y) {$Y$}
                node[above right of=a, xshift=-0.2cm, yshift=-0.3cm] (c45) {$C_{45}$}
                node[above left of=a] (c123) {$C_{123}$}
                node[above of=a, yshift=0.3cm] (u1) {$U_1$}
                node[above of=y, yshift=0.3cm] (u2) {$U_2$}
                % node[below of=a, yshift=0.8cm, xshift=1cm] (label) {(a) M-bias example}
                
                (c123) edge[] (a)
                (a) edge[] (y)
                (c123) edge[] (y)
                (u1) edge[red] (a)
                (u1) edge[red] (c45)
                (u2) edge[red] (y)
                (u2) edge[red] (c45)
                (u2) edge[red] (y)
                (z) edge[] (a)
                (u1) edge[red, dashed, bend right=55] (z)
                ;
                \end{scope}
				
				\begin{scope}[xshift=4cm]
                \path[->, very thick]
                node[] (a) {$A$}
                node[left of=a] (z) {$Z$}
                node[right of=a] (m) {$M$}
                node[right of=m] (y) {$Y$}
                node[above of=a] (c) {$C$}
                node[above of=y] (u) {$U$}
                (a) edge[] (m)
                (m) edge[] (y)
                (z) edge[] (a)
                (c) edge[] (y)
                (c) edge[] (a)
                (c) edge[] (m)
                (u) edge[red] (a)
                (u) edge[red] (y)
                (z) edge[blue, dashed, bend left=35] (m)
                ;
            \end{scope}
				
			\end{tikzpicture}
		}
	\end{center}
	\vspace{-0.4cm}
	\caption{Causal DAG used in our simulations to demonstrate robustness to (a) M-bias in some adjustment sets, and (b) misspecification of backdoor, frontdoor, or IV models.}%---when the blue dashed edge is present in (b), both backdoor and IV are incorrect, and only frontdoor remains correct.}
	\label{fig:app-dags}
\end{figure}
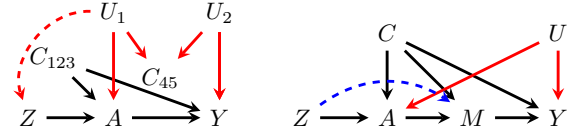

We now illustrate practical applications of our general method, along with numerical experiments and an empirical data application. The first subsection outlines a general framework for triangulating causal effect estimates derived from backdoor models with different candidate adjustment sets. The second subsection describes triangulation with backdoor, frontdoor model, and IV models. %In both cases we also provide empirical evidence of the effectiveness of triangulation using simulated data based on the examples in Section~\ref{sec:motivation};
Explicit descriptions of the data generating processes and estimators are in Appendix~\ref{app:dgps} and~\ref{app:estimators} respectively, with proofs in Appendix~\ref{app:proposition-proofs}.

\subsection{Multiple Candidate Adjustment Sets}
\label{subsec:multi-bdoor}

We first establish a set of assumptions ${\cal A}$ under which we can use observed data parameters $\beta_k$ to test the validity of each proposed adjustment set corresponding to different causal models ${\cal M}_k$. Let $C$ denote all pre-treatment covariates under consideration for adjustment and $Z$ denote an ``anchor'' variable, such that the following assumptions hold:
\begin{align}
&{\cal A}_1: P \text{ is faithful wrt a causal DAG } \G(V\cup U) \nonumber \\
&{\cal A}_2: \G \text{ satisfies the causal ordering } \{Z, C\} < A< Y\nonumber \\
&{\cal A}_3: Z \rightarrow A \rightarrow Y \text{ exists in } \G
\label{eq:assumptions-backdoor}
\end{align}
Note that ${\cal A}_1$ above is ordinary faithfulness as we do not rely on Verma constraints for this particular application, ${\cal A}_2$ simply imposes some weak background knowledge on the causal ordering of the variables, and ${\cal A}_3$ is a relevance assumption that ensures the testable implications are not trivial and actually rule out non-identifying edges in $\G$.\footnote{Since $A\rightarrow Y$ exists, the sharp causal null is assumed false. For robust causal null hypothesis tests, see \cite{yangstatistical}.} %Versions of these assumptions appear in \cite{entner2013data, gultchin2020differentiable, shah2022finding}.

\begin{proposition}
  \label{prop:backdoor-tests}
  Let ${\cal M}_k$ be a backdoor model that proposes $W \subseteq C$ as an adjustment set and $\beta_k$ be any observed data parameter such that $\beta_k = 0 \iff Y \ci Z \mid A, W$. Under assumptions ${\cal A}$ in \eqref{eq:assumptions-backdoor}, $\beta_k = 0$ implies ${\cal M}_k$ is correct, i.e., $\psi_k = \theta$, where $\psi_k$ is the backdoor formula \eqref{eq:backdoor} with $L=W$.
\end{proposition}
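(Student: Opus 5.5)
The plan is to show that $Y \ci Z \mid A, W$, together with ${\cal A}_1$--${\cal A}_3$, forces $W$ to satisfy Pearl's backdoor criterion relative to $(A,Y)$ in $\G(V\cup U)$. The g-formula \eqref{eq:g-formula} then reduces to the backdoor formula \eqref{eq:backdoor} with $L=W$, which gives $\psi_k=\theta$. This is essentially the argument behind the first inference rule of \cite{entner2013data}.

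First, $\beta_k=0$ gives $Y \ci Z \mid A, W$ in $P$. By faithfulness (${\cal A}_1$), $Z$ and $Y$ are then d-separated given $\{A\}\cup W$ in $\G(V\cup U)$. The backdoor criterion has two parts, and the first is immediate. Since $W\subseteq C$ and ${\cal A}_2$ puts $C$ before $A$, no element of $W$ is a descendant of $A$.

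The substantive part is that $W$ blocks every backdoor path from $A$ to $Y$. I will argue by contradiction. Suppose $\pi$ is a backdoor path $A \leftarrow \cdots Y$ that is d-connecting given $W$. By ${\cal A}_3$ the edge $Z\rightarrow A$ exists, so form the concatenation $\pi' = Z\rightarrow A \leftarrow \cdots Y$.

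\emph{Case 1: $Z\notin\pi$.} Then $\pi'$ is a genuine path, and $A$ is a collider on it because the backdoor path has an arrowhead into $A$. I check that $\pi'$ is d-connecting given $\{A\}\cup W$ as follows.
\begin{itemize}
\item The collider $A$ is itself in the conditioning set.
\item Every non-collider of $\pi$ lies outside $W$, because $\pi$ is open given $W$. It also differs from $A$, because $A$ appears only once on a path.
\item Every collider of $\pi$ has a descendant in $W$, and enlarging the conditioning set to $\{A\}\cup W$ preserves this.
\end{itemize}
This contradicts the d-separation obtained above.

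\emph{Case 2: $Z\in\pi$.} Take the subpath $\pi_Z$ of $\pi$ from $Z$ to $Y$; it does not contain $A$. Its non-colliders are outside $W$ and are not $A$, and its colliders still have descendants in $W\cup\{A\}$. So $\pi_Z$ is d-connecting given $\{A\}\cup W$, again a contradiction. A cleaner alternative for this bookkeeping is to work with d-connecting walks, which avoids the repeated-vertex issue altogether.

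Having established the backdoor criterion, the standard result (the g-formula \eqref{eq:g-formula} marginalized over $U$ via backdoor adjustment) gives $P(Y\mid \doo(a)) = \sum_w P(Y\mid a,w)P(w)$. Hence $\theta=\psi_k$. Positivity is implicitly required for the conditional expectations in \eqref{eq:backdoor} to be well defined.

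I expect the main obstacle to be the path-concatenation step: making sure that conditioning on $A$ opens the collider at $A$ without blocking anything else, and handling the case where $Z$ lies on $\pi$. Both need care with the ordering assumption ${\cal A}_2$, which guarantees that $W$ contains no descendants of $A$ and that $A$ appears only at the start of $\pi$. Everything else is routine.
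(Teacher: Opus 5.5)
Your proof is correct, but it takes a different route from the paper. The paper does no path surgery. It cites \cite{entner2013data}, which under ${\cal A}_1,{\cal A}_2$ certifies $W \subseteq C$ as a valid adjustment set whenever both $Y \not\ci Z \mid W$ and $Y \ci Z \mid A, W$ hold. It then uses ${\cal A}_3$ only to supply the dependence: the directed path $Z \rightarrow A \rightarrow Y$ is open given $W$ because $A \notin W$, so faithfulness gives $Y \not\ci Z \mid W$ automatically. That leaves $\beta_k = 0$ as the only requirement.

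You instead prove adjustment validity directly. You concatenate the known edge $Z \rightarrow A$ with a putative open backdoor path, so that $A$ becomes a conditioned collider, or you truncate at $Z$ when $Z$ lies on that path. Your case analysis is sound: non-colliders stay outside $\{A\}\cup W$, and colliders keep their descendants in the enlarged conditioning set.

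Your route is self-contained and uses only the edge $Z \rightarrow A$ from ${\cal A}_3$. It therefore shows that neither the edge $A \rightarrow Y$ nor the dependence $Y \not\ci Z \mid W$ is needed to conclude $\psi_k = \theta$. In the general result of \cite{entner2013data}, that dependence condition essentially stands in for not knowing such a direct edge into the treatment. The paper's route is shorter, but it leans on an external theorem and on the stronger form of ${\cal A}_3$.

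One small correction: $A$ appearing only at the start of $\pi$ follows from $\pi$ being a simple path, not from the ordering ${\cal A}_2$.
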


    A natural choice for $\beta_k$ above is the log-odds ratio. For distinct sets of variables $A, B, C$ and a choice of reference values $a_0, b_0$, the odds ratio function $\text{OR}(A, B \mid C)$ is a non-parametric measure of association given by \citep{chen2007semiparametric},
    \begin{align*}
    &\text{OR}(a, b \mid c) = \frac{P(a\mid b, c)}{P(a_0\mid b, c)}\times \frac{P(a_0\mid b_0, c)}{P(a\mid b_0, c)},
    \end{align*}
and $\log(\text{OR}(A, B \mid C)) = 0 \iff A \ci B \mid C$. Though the log-odds ratio is a function, it is most commonly treated as a single scalar parameter in a semiparametric model; see for e.g., \citep{chen2007semiparametric, tchetgen2010doubly, malinsky2019potential}. We adopt this setup, computing log-odds ratios of the form $\log(\text{OR}(Y, Z \mid A, W))$  for different choices of adjustment sets $W$ as given by ${\cal M}_1, \dots, {\cal M}_k$, and use these as our $\beta_k$ in the triangulation functional \eqref{eq:triangulation_rule}.

We return to the M-bias scenario in Section~\ref{sec:motivation} as a concrete application. The candidate adjustment sets are ${\cal M}_1: \{C_1, C_2, C_3\}$, ${\cal M}_2: \{C_1, C_2, C_3, C_4\}$, and ${\cal M}_3: \{C_1, C_2, C_3, C_4, C_5\}$, where only ${\cal M}_1$ is correct. Figure~\ref{fig:app-dags}(a) shows a DAG with an anchor variable $Z$ satisfying assumptions ${\cal A}$ in \eqref{eq:assumptions-backdoor} with or without the red dashed edge. Note that choices for anchor variables in the causal discovery literature are often candidate IVs, such as $Z$ in Figure~\ref{fig:app-dags}(a). While one could include IV-based estimates in the triangulation functional, when a valid adjustment set exists, backdoor estimators are more efficient and avoid homogeneity assumptions.

By Proposition~\ref{prop:backdoor-tests}, $\beta_1 = \log(\text{OR}(Y, Z \mid A, C_{123}))=0$, while $\beta_2$ and $\beta_3$ are non-zero. We evaluate two settings, one where $\varepsilon = \min_{k \in {\cal I}} = |\beta_k| = 0.71$, and another where $\varepsilon=0.36$. We achieve the latter by excluding $U_1 \rightarrow Z$ in Figure~\ref{fig:app-dags}(a).  Per Theorem~\ref{thm:robustness}, the second setting is more challenging for our triangulation estimator, as the absolute bias $|\psi - \theta|$ is higher, since it is harder to detect the incorrect models. Here we use influence function-based estimators of the triangulation functional. As shown in Figure~\ref{fig:results} (top row), the estimator remains causally robust in both cases, despite the plurality of models being incorrect. We require more samples, however, to obtain reliable estimates when $\varepsilon = 0.36$, as predicted by our theory. To test our variance computation proposal and Wald-type CI construction, we also compute the coverage of the estimator for both $\psi$ and the target parameter $\theta$. We report coverage for both, as these parameters are different in general, albeit with a small bounded difference. At $n=5000$, the estimator achieves the nominal coverage of $95\%$ for both $\psi$ and $\theta$ in both settings.

\subsection{Backdoor, frontdoor, and IV}
\label{subsec:bdoor-fdoor-iv}

\begin{figure*}
    \begin{center}
        \includegraphics[width=0.9\linewidth]{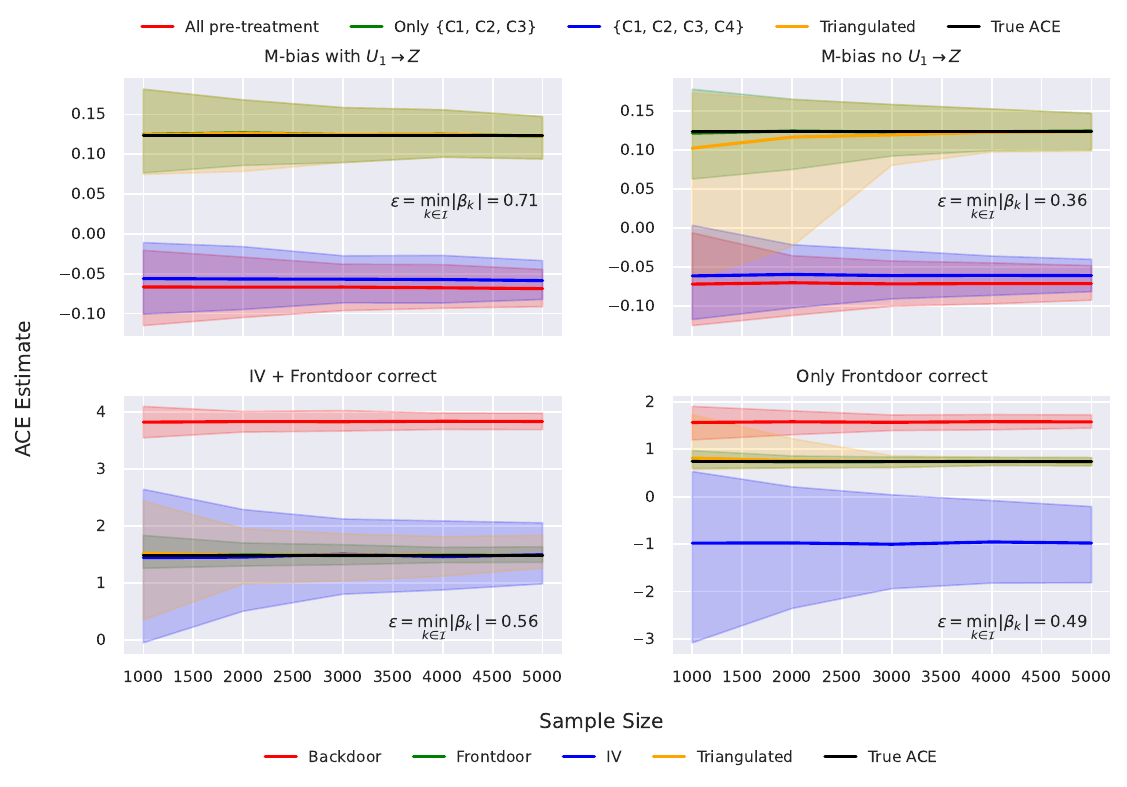}
    \end{center}
    \vspace{-0.5cm}
    \caption{Point estimates averaged over $200$ trials; shaded bands correspond to $2.5$ and $97.5$ percentiles of the estimates.}
    \label{fig:results}
\end{figure*}

Similar to the previous subsection, we first establish  assumptions under which we can use observed data to test each ${\cal M}_k$, where ${\cal M}_k$ could be a backdoor, frontdoor, or IV model. As before, let $C$ denote pre-treatment covariates used for adjustment and $Z$ an anchor variable. As stated earlier, the anchor $Z$ is often a candidate IV, and we will treat it as such in this subsection. Further, let $M$ be a mediator set such that:
\begin{align}
&{\cal A}_1: P \text{ is Verma faithful wrt a causal DAG } \G(V\cup U) \nonumber \\
&{\cal A}_2: \G \text{ satisfies the ordering } \{Z, C\} < A< M < Y\nonumber \\
&{\cal A}_3: Z \rightarrow A \rightarrow M \rightarrow Y \text{ exist in } \G 
\label{eq:assumptions-backdoor-frontdoor-iv} \\
&{\cal A}_4: \text{If } U_i \in U \text{ causes } M \text{ it also causes } Y \nonumber
\end{align}
Note that ${\cal A}_1$ includes ordinary faithfulness as a special case when the intervention set is empty, ${\cal A}_2$  imposes a causal ordering of the variables as before, and ${\cal A}_3, {\cal A}_4$ are relevance assumptions that ensure the testable implications are not trivial and actually rule out non-identifying edges in $\G$. %Versions of these assumptions are used for designing tests for the frontdoor and IV models in \cite{bhattacharya2022testability} and the frontdoor model in \cite{shah2023front}.

\begin{proposition}
\label{prop:bdoor-fdoor-iv}
Let ${\cal M}_1, {\cal M}_2, {\cal M}_3$ be backdoor, frontdoor, and IV models respectively and $\beta_1, \beta_2, \beta_3$ be observed data parameters that are zero iff the independence $Y \ci Z \mid A, C$, the Verma constraint $Y \ci Z \mid C \text{ in } p(V)/p(M | A, Z, C)$, and the independence $M \ci Z \mid A, C$ hold in $P$ respectively.
% \begin{align*}
%     &\beta_1 = 0 \iff Y \ci Z \mid A, C \\
%     &\beta_2 = 0 \iff Y \ci Z \mid C \text{ in } p(V)/p(M\mid A, Z, C) \\
%     &\beta_3 = 0 \iff M \ci Z \mid A, C.
% \end{align*}
Under assumptions ${\cal A}$ in \eqref{eq:assumptions-backdoor-frontdoor-iv}, 
% if $\beta_1 = 0$ implies  ${\cal M}_1$ is correct with adjustment set $C$, if $\beta_2 = 0$ implies the frontdoor model ${\cal M}_2$ is correct with mediator set $M$ and adjustment set $C \cup \{Z\}$, and if both $\beta_2=\beta_3=0$ implies the IV model ${\cal M}_3$ is correct.
\begin{align*}
&\beta_1 = 0 \implies \theta = \psi_1 \text{ in \eqref{eq:backdoor} with } L=C, \\
&\beta_2 = 0 \implies \theta = \psi_2 \text{ in \eqref{eq:frontdoor} with } M=M, L=C \cup\{Z\}, \\
&\beta_2=\beta_3 = 0 \implies \theta = \psi_3 \text{ in \eqref{eq:iv} with } Z=Z, L = C.
\end{align*}
\end{proposition}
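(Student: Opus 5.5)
The plan is to prove each implication by contraposition. For each model I will identify the edges whose absence makes the identifying formula valid. I will then show that the presence of any such edge, together with the ordering ${\cal A}_2$ and the relevance edges in ${\cal A}_3$, creates a d-connecting path. Verma faithfulness in ${\cal A}_1$ then converts that path into a violated constraint, so the corresponding $\beta_k\neq 0$.

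\emph{Backdoor ($\beta_1$).} Suppose the backdoor formula with $L=C$ fails. Then there is an unblocked backdoor path from $A$ to $Y$ given $C$, of the form $A\leftarrow\cdots\rightarrow Y$; by ${\cal A}_2$ it enters $A$ through a parent of $A$ or a latent confounder. Prepending $Z\rightarrow A$ (from ${\cal A}_3$) makes $A$ a collider on the new path. Conditioning on $A$ opens that collider, and $A$ is also an ancestor of $Y$. I will argue, as in the testability results of Entner et al.\ that this paper cites, that the resulting path is d-connecting from $Z$ to $Y$ given $\{A\}\cup C$. Faithfulness then gives $Y\not\ci Z\mid A,C$, i.e.\ $\beta_1\neq0$.

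\emph{Frontdoor ($\beta_2$).} The frontdoor formula with $L=C\cup\{Z\}$ requires three things: no direct $A\rightarrow Y$ edge; no unblocked confounding of $A$--$M$ given $C,Z$; and no confounding of $M$--$Y$ given $A,C,Z$. Confounding that reaches $M$ is handled by ${\cal A}_4$: any latent parent of $M$ also causes $Y$, so it suffices to rule out a latent $M\leftrightarrow Y$ path and a direct $A\rightarrow Y$ edge. Working in $\G_{\doo(m)}$, either violation yields a d-connecting path from $Z$ to $Y$ given $C$. A direct edge gives $Z\rightarrow A\rightarrow Y$. A latent $U\rightarrow M$ with $U\rightarrow Y$ leaves $M$ dependent on $U$ in the kernel, so the kernel no longer equals $P(\cdot\mid\doo(m))$; alternatively, Verma faithfulness directly requires the kernel to be a valid intervention distribution. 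Confounding of $A$--$M$ breaks the same equality. The contrapositive of Verma faithfulness then gives $\beta_2\neq0$. I expect this step to be the main obstacle. The difficulty is ensuring that every failure of the frontdoor conditions is caught, either by d-connection in $\G_{\doo(m)}$ or by failure of the kernel--intervention equality that Verma faithfulness builds in.

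\emph{IV ($\beta_2=\beta_3=0$).} From the frontdoor step I already know there is no direct $A\rightarrow Y$ edge. Next, $M\ci Z\mid A,C$ and faithfulness exclude $Z\rightarrow M$ and any latent $Z\leftrightarrow M$ path given $A,C$. The Verma constraint excludes $Z\leftrightarrow Y$ and $Z\rightarrow Y$, since either would d-connect $Z$ and $Y$ given $C$ in $\G_{\doo(m)}$. Hence $Z$ satisfies exclusion and unconfoundedness relative to $C$, and relevance holds by ${\cal A}_3$. Under the homogeneity assumption stated with \eqref{eq:iv}, the Wald ratio then identifies $\theta$.
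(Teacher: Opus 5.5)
Your plan follows the paper's proof in all three parts:
\begin{itemize}
\item the Entner et al.\ argument for the backdoor case;
\item for the frontdoor case, the d-separation clause of Verma faithfulness to exclude $A\rightarrow Y$ (via $Z\rightarrow A\rightarrow Y$ in $\G_{\doo(m)}$), and its identification clause with ${\cal A}_4$ to exclude latent confounding of $M$;
\item for the IV case, a combination of both constraints, which the paper takes from Corollary~1.1 of \cite{bhattacharya2022testability}.
\end{itemize}
Using contraposition instead of contradiction makes no difference. The gap is in the frontdoor step you flagged: it contains two inaccurate claims, and the reduction it rests on is asserted without proof.

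First, a latent $U$ with $U\rightarrow M$ and $U\rightarrow Y$ does \emph{not} create a d-connecting $Z$--$Y$ path given $C$ in $\G_{\doo(m)}$. The intervention deletes $U\rightarrow M$ and $A\rightarrow M$, and a route such as $Z\rightarrow A\leftarrow U\rightarrow Y$ is blocked at the unconditioned collider $A$. Only the identification clause catches this case: with $M\rightarrow Y$ and a latent common parent, $P(Z,A,C,Y\mid\doo(m))$ is not identified (Tian's criterion), so it cannot equal $P(V)/P(M\mid A,Z,C)\vert_{M=m}$.

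Second, $A$--$M$ confounding does not break that equality on its own. If $U\rightarrow A$, $U\rightarrow M$ is the only latent structure, the kernel still equals $P(Z,A,C,Y\mid\doo(m))$. Your final sentence is true only through ${\cal A}_4$.

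The missing argument that closes the obstacle is as follows. By ${\cal A}_2$, any observed parent of $M$ lies in $\{Z,C,A\}$, so on a backdoor path into $M$ given $\{Z,C\}$ or $\{A,Z,C\}$ it is a conditioned non-collider. A path from $A$ that reaches $M$ through $M\rightarrow Y$ is blocked at the collider $Y$. Hence every violation of the second or third frontdoor condition ends in a latent parent of $M$. By ${\cal A}_4$ that latent is also a parent of $Y$, which is exactly the non-identified case above. The paper reaches the same reduction via Tian's criterion for $\doo(a)$: once $A\rightarrow Y$ is excluded, $M$ is the only child of $A$.

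In the IV step, excluding the edges $Z\rightarrow Y$ and $Z\leftrightarrow Y$ does not give $Z\ci_{\text{d-sep}}Y\mid C$ in the graph with $A$'s outgoing edges deleted. Collider paths through $C$, such as $Z\leftarrow U_1\rightarrow C\leftarrow U_2\rightarrow Y$, must also be ruled out. The complete argument runs as follows.
\begin{itemize}
\item With $A\rightarrow Y$ absent, that graph and $\G_{\doo(m)}$ differ only in edges into $M$ from $Z$, $C$, or latents, and neither $A$ nor $M$ is an ancestor of $C$.
\item So any open path avoiding those edges is also open in $\G_{\doo(m)}$, contradicting the Verma constraint.
\item Any open path using one of them must pass through $M$ as a non-collider, since $M$ cannot be an open collider and a parent in $C$ would block. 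It therefore ends in $X\rightarrow M\rightarrow Y$ with $X=Z$ or a latent.
\item The $Z$--$M$ segment of such a path is open given $A,C$, because $A$ cannot lie on an open path in that graph. This contradicts $M\ci Z\mid A,C$ under faithfulness.
\end{itemize}
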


We will use $\log(\text{OR}(Y, Z \mid A, C))$ and $\log(\text{OR}(M, Z \mid A, C))$ as $\beta_1$ and $\beta_3$ respectively. For $\beta_2$, we use a reweighted log-odds  $\log(\text{OR}_{\widetilde{P}}(Y, Z \mid C))$ defined with respect to the distribution $\widetilde{P} = p(V)/P(M\mid A, Z, C)$. This is estimated using a reweighted regression procedure, similar to those described in \cite{robins1997estimation, bhattacharya2022testability, robins2000marginal}. 

%Different from the previous subsection, 
The validity of the IV model also relies on assumptions of homogeneity of the treatment effect---we assume such conditions hold apriori and focus only on structural assumptions of the causal models. Further, per Proposition~\ref{prop:bdoor-fdoor-iv}, testability of the IV model relies on testability of the frontdoor model, since $\beta_2$ must be zero as well---other known tests of the IV assumptions are also in over-identified models \citep{kitagawa2015test}. Thus, to test the IV model we use $\widetilde{\beta_3} = \beta_2 + \beta_3$  under the assumption that these parameters do not exactly cancel out, which can be taken to be a form of faithfulness.

We return to Scenario 2 in Section~\ref{sec:motivation}. Figure~\ref{fig:app-dags}(b) is an example of a DAG that satisfies assumptions ${\cal A}$ in \eqref{eq:assumptions-backdoor-frontdoor-iv}. Since influence function-based estimators of parameters encoding Verma constraints, such as $\log(\text{OR}_{\widetilde{P}}(Y, Z \mid C)))$, are underdeveloped, we rely on plug-in estimators for each $\psi_k, \beta_k$ with parametric nuisance estimators. This also serves as a test of the bootstrapping branch of our triangulation procedure.  We test two scenarios, one in which frontdoor and IV are correct and testable while backdoor is incorrect, and the other in which only frontdoor  is correct and testable (using the blue dashed edge in Figure~\ref{fig:app-dags}(b)). Figure~\ref{fig:results} shows that our estimator maintains robustness in both  scenarios. When both frontdoor and IV are correct and testable, our triangulation estimator also exhibits lower variance than relying solely on the IV model. Thus, the estimator has benefits even when an analyst may be certain of some identifying assumptions if these do not yield the most efficient estimator. At $n=5000$ and when both frontdoor and IV are correct, we achieve coverage of $97\%$ and $96\%$ for $\psi$ and $\theta$ respectively; when only the frontdoor model is correct, the coverage is just below nominal coverage---$93\%$ for both $\psi$ and $\theta$.

We also use Scenario 2 to  test our recommendations in Section~\ref{subsec:kernel} for setting $a$. In particular, we use the scenario where the frontdoor and IV models are correct, and try four different settings of the bandwidth parameter: $a=n^{-1/3}$, $1/\log(n)$, $1/\sqrt{\log(n)}$, and $1/n$. The results are shown in Figure~\ref{fig:different-as}. The choices $a=n^{-1/3}$ and $a=1/\log(n)$ lie within our recommended range and have roughly similar performance. Setting $a=1/\sqrt{\log(n)}$ takes $a$ to zero too slowly, resulting in excess bias and invalid inference for the true parameter $\theta$. Finally, setting $a=1/n$ takes $a$ to zero faster than our recommendation and results in larger variance.

\begin{figure}[t]
    \includegraphics[scale=0.54]{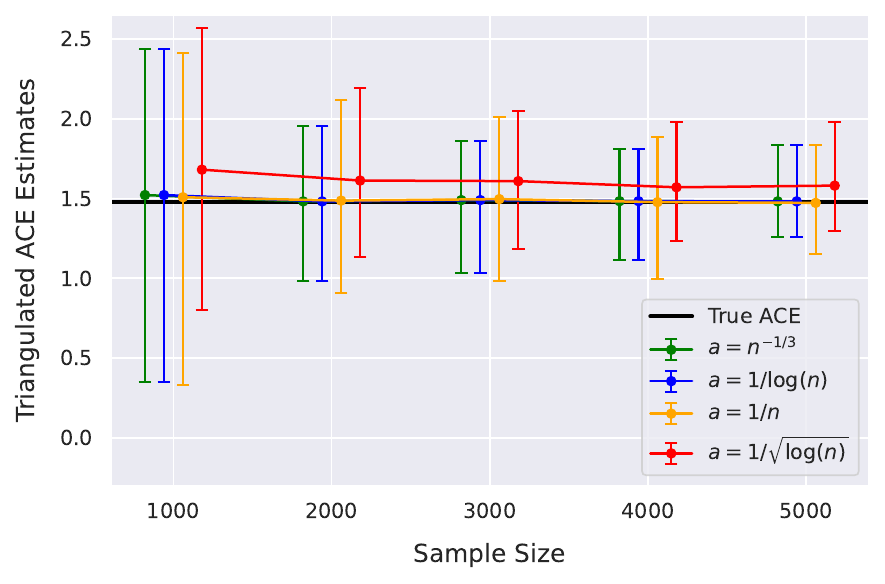} % 
    \vspace{-0.3cm}
    \caption{Triangulated point estimates using different settings of $a$ over 200 trials where frontdoor and IV are correct.}
    \label{fig:different-as}
\end{figure}

\subsection{Framingham Data Application}
\label{subsec:framingham}

\renewcommand{\arraystretch}{1.2}
{\small
\begin{table}[ht]
\centering
\begin{tabular}{|c|c|c|}
\toprule
\textbf{Method} & ${\widehat{\mathrm{{\bf ACE}}}}$ & ${ \beta_{k, n}, {w_{k, n}}}$ \\
\midrule
Backdoor & $0.086\ (0.048, 0.115)$ & $-0.04, 0.45$ \\
Frontdoor & $0.011\  (0.006, 0.016)$ & $-0.02, 0.55$ \\
IV & $-3.12\  (-14.7, 11.09)$ & $-0.41, \approx0$ \\
Triangulation  & $0.044 (0.026, 0.062)$ & -- \\
\bottomrule
\end{tabular}
\caption{Results for the empirical data application.}
\label{tab:triangulation}
\end{table}}

%\textbf{Framingham Data Application.}
We use data from the Framingham Heart Study \citep{kannel1968framingham} to estimate the effect of blood glucose levels on coronary heart disease. We treat blood glucose levels as a binary treatment variable, where $A=1$ corresponds to higher than average levels. The outcome $Y$ is a binary indicator of coronary heart disease. Similar to our setup in Section~\ref{subsec:bdoor-fdoor-iv}, we consider backdoor, frontdoor, and IV models, where  $Z=\text{educational attainment}$, $C=\text{sex}$, and $M=\text{hypertension}$. We intentionally adjust for only one confounder  $C$, so that the assumptions of backdoor in particular are difficult to justify, and so triangulation with other estimates from frontdoor or IV may be important.

% Our results are in Table~\ref{tab:triangulation}. The last column shows the weights provided to each model in the triangulation functional. The frontdoor model has the highest weight, while the IV model has the lowest. However, the backdoor model also has fairly high weight---we hypothesize that this is because sex is one of the strongest confounding variables for heart disease and high blood sugar, so that the assumptions of the backdoor model may still approximately hold. The triangulation functional accordingly provides an effect estimate that lies inbetween the backdoor and frontdoor estimates, but mostly disregards the IV estimates based on model diagnostics. Overall, the triangulated estimate suggests a 4.7\% increase in coronary heart disease in the population when intervening to set blood glucose levels to be higher than average. We believe this strikes a good balance through triangulation of the possible overestimate from the backdoor model---as a check, we also confirm that including other confounders, such as age, in the adjustment set does indeed increase the weight that would be provided to the backdoor model in the triangulation functional, while also reducing its effect estimate to about 0.05.
Results are shown in Table~\ref{tab:triangulation}, where the last column reports model weights. The frontdoor model receives the highest weight and the IV model the lowest. The backdoor model also retains substantial weight---likely because sex is one of the strongest confounders for both heart disease and blood glucose, making its assumptions approximately valid. The triangulated estimate lies between the backdoor and frontdoor estimates, largely discounting the IV estimate based on diagnostics, and indicates a 4.4\% increase in coronary heart disease under intervention to raise blood glucose. As a check, we confirm including additional confounders (e.g., age) does indeed increase the weight applied to the backdoor model, while  its point estimate drops to about 0.05.

\section{Discussion and Conclusion}
In this work we proposed a general framework for triangulating causal effects that uses data-driven weights based on measures of model validity. We showed that our triangulation functional  trades robustness to causal model misspecification in exchange for some modest bias, as well as additional assumptions like faithfulness and background knowledge. That is, while robustness through triangulation is desirable, our framework is not without limitations.

Faithfulness, in particular, is an assumption that is often contested in the causal discovery literature. While violations of faithfulness are rare in a measure-theoretic sense \citep{meek1995strong, boeken2024bayesian}, near violations of faithfulness can be fairly frequent \citep{uhler2013geometry}. In our framework, near violations of faithfulness can result in small $\varepsilon$ in Theorem~\ref{thm:robustness} and thus large bias. Analysis for cases where $\varepsilon$ shrinks with $n$ would provide deeper  understanding of the impact of near violations of faithfulness. This is an area of potential future research.

We also developed inference strategies with frequentist guarantees for our proposed triangulation functional, and demonstrated their performance through numerical studies and a data application. This included challenging scenarios in which the plurality of models vote similarly on an incorrect causal effect. Extensions to this may focus on sensitivity analysis, deriving other practical scenarios in which the framework can be used, and data-driven selection of the kernel parameter $a$. Another interesting avenue for future research is to incorporate ideas from Bayesian paradigms for model averaging into our framework, particularly those which provide robustness to unfaithfulness and the need for point identification \citep{silva2016causal}.

% \begin{contributions} % will be removed in pdf for initial submission 
% \end{contributions}

\begin{acknowledgements} 

The authors would like to thank Daniel Malinsky for helpful discussions on influence function-based estimation of odds ratios,  Hyunseung Kang and Oliver Dukes for helpful discussions on subsampling, and five anonymous
reviewers for their helpful peer review. RB would like to thank the Isaac Newton Institute for Mathematical Sciences for the support and hospitality during the programme \emph{Causal inference: From theory to practice and back again} when some of the work on this paper was undertaken. This work was supported by: EPSRC grant EP/Z000580/1 (RB),  NSF CRII grant 2348287 (RB), and NSF DMS 2113171 (TW).

\end{acknowledgements}

% \clearpage

% References
\bibliography{references}

\newpage

\onecolumn

\title{Robust Weighted Triangulation of Causal Effects Under Model Uncertainty\\(Supplementary Material)}
\maketitle

In this supplement we provide proofs that were omitted from the main paper for space, as well as details of data generating processes and the specific estimators used in our numerical experiments and data application.

\appendix

\section{Proof Of Theorem~\ref{thm:robustness}}
\label{app:theorem-proof}
\begin{proof}
   Since $\sum_k w_k = 1$, $w_k \geq 0$, and $\psi_k = \theta$ for any $k \in \cal{C}$, we have 
\begin{align*}
    |\psi - \theta|&= \left| \sum_{k=1}^{K} w_k \psi_{k} - \theta \right| \\
    &= \frac{\left| \sum_{k=1}^{K} \delta_a(\beta_k) \left[ \psi_{k} - \theta \right] \right|}{\sum_{k=1}^K \delta_a(\beta_k)} \\
    &= \frac{\left| \sum_{k \in \cal{I}} \delta_a(\beta_k) \left[ \psi_{k} - \theta \right] \right|}{\sum_{k \in \cal{I}} \delta_a(\beta_k) + \sum_{k \in \cal{C}} \delta_a(\beta_k)}  \\
    &\leq \frac{ \sum_{k \in \cal{I}} \delta_a(\beta_k) \max_k \left|\psi_{k} - \theta \right|}{\sum_{k \in \cal{I}} \delta_a(\beta_k) + \sum_{k \in \cal{C}} \delta_a(\beta_k)}  \\
    &= \frac{ \max_k \left|\psi_{k} - \theta \right|}{1 + D_a}.
\end{align*}
 Now by the definition of $\delta_a$,
 \begin{align*}
     D_a &= \frac{\sum_{k \in \cal{C}} \delta_a(\beta_k)}{\sum_{k \in \cal{I}} \delta_a(\beta_k)} = \frac{\sum_{k \in \cal{C}} e^{-\beta_k^2 / a^2}}{\sum_{k \in \cal{I}} e^{-\beta_k^2 / a^2}}.
 \end{align*}
Under our assumptions, $\beta_k = 0$ for any model that is correct and testable. If there is at least one correct and testable model, then  $e^{-\beta_k^2 / a^2} = 1$ for this model, and so $\sum_{k \in \cal{C}} e^{-\beta_k^2 / a^2} \geq 1$. For the denominator, by monotonicity of $x \mapsto e^{-(x/a)^2}$ for $x \geq 0$, we get $\sum_{k \in \cal{I}} e^{-\beta_k^2 / a^2} \leq |\mathcal{I}| e^{-\varepsilon^2 / a^2}$. The result follows.
\end{proof}

\section{Deriving The Partial Derivative Vector In Lemma~\ref{lem:partials}}
\label{app:partials}

To obtain the variance of the combined estimator using the delta method, we require the following vector of partial derivatives,
\begin{align}
\gamma = \bigg[\ \frac{\partial \psi}{\partial \beta_1}, \dots,  \frac{\partial \psi}{\partial\beta_K}, \frac{\partial \psi}{\partial \psi_1}, \dots,  \frac{\partial \psi}{\partial \psi_K} \ \bigg].
\end{align}
We divide the task of deriving the partial derivatives $\frac{\partial \psi}{\partial \beta_k}$ and $\frac{\partial \psi}{\partial \psi_k}$ for all $k \in \{1, \dots, K\}$ into two subsections as follows.

\subsection{Partial derivative of $\psi$ with respect to $\beta_k$}

First note that by plugging in the definition of $\psi$ and by linearity of differentiation we have,
\begin{align}
\frac{\partial {\psi}}{\partial \beta_k} = \frac{\partial}{\partial \beta_k} \left( \sum_{i=1}^{K} w_i \psi_{i} \right) = \sum_{i=1}^{K} \left( \frac{\partial (w_i \psi_i) }{\partial \beta_k} \right).
\label{eq:partial-betai-1}
\end{align}
For each term in \eqref{eq:partial-betai-1}, we can apply the product rule to get,
\begin{align}
\frac{\partial {\psi}}{\partial \beta_k} = \sum_{i=1}^K \left (  w_i \frac{\partial \psi_i }{\partial \beta_k} + \psi_i \frac{\partial w_i }{\partial \beta_k} \right).
\end{align}
The estimators $\psi_i$ are not functions of $\beta_k$ (even when $i=k$). Thus, $\frac{\partial \psi_i}{\partial \beta_k}$ is always $0$. So the derivative simplifies to, 
\begin{align}
\frac{\partial {\psi}}{\partial \beta_k} = \sum_{i=1}^K  \psi_i \frac{\partial w_i }{\partial \beta_k}.
\label{eq:partial-betai-1}
\end{align}
Plugging in the definition for weights $w_i$ we get,
\begin{align}
\frac{\partial {\psi}}{\partial \beta_k} = \sum_{i=1}^K \psi_i \times \frac{\partial}{\partial \beta_k} \left( \delta(\beta_i) / \big(\lambda_n + \sum_{j=1}^K \delta(\beta_j) \big) \right).
\label{eq:partial-betai-2}
\end{align}
We now separate the above terms in the summation based on whether $i=k$ or not, as these are the only two cases that lead to substantively different partial derivates with respect to $\beta_k$:
\begin{align}
\frac{\partial {\psi}}{\partial \beta_k} = \psi_k \times \frac{\partial}{\partial \beta_k} \left( \delta(\beta_k) / \big(\lambda_n + \sum_{j=1}^K \delta(\beta_j) \big) \right) + \sum_{i\not=k} \psi_i \times \frac{\partial}{\partial \beta_k} \left( \delta(\beta_i) / \big(\lambda_n + \sum_{j=1}^K \delta(\beta_j) \big) \right).
\label{eq:partial-betai-3}
\end{align}
We first deal with,
\begin{align*}
\frac{\partial}{\partial \beta_k} \left( \delta(\beta_k) / \big(\lambda_n + \sum_{j=1}^K \delta(\beta_j) \big) \right).
\end{align*}
By the quotient rule,
\begin{align}
\frac{\partial}{\partial \beta_k} \left( \delta(\beta_k) / \big(\lambda_n + \sum_{j=1}^K \delta(\beta_j) \big) \right) = \frac{\frac{\partial \delta(\beta_k)}{\partial \beta_k} \big( \lambda_n + \sum_{j=1}^K \delta(\beta_j) \big) - \delta(\beta_k) \frac{\partial \sum_{j=1}^K \delta(\beta_j) }{\partial \beta_k }}{ \left( \lambda_n + \sum_{j=1}^K \delta(\beta_j) \right)^2 }.
\label{eq:quotient-1}
\end{align}
We introduce two helper derivatives of the Dirac delta function to simplify \eqref{eq:quotient-1}:
\begin{equation}
\frac{\partial\  \delta(\beta_j)}{\partial \beta_k} = \frac{\partial}{\partial \beta_k } \left( \frac{1}{|a|\sqrt{\pi}} e^{-\left(\frac{\beta_j}{a}\right)^2}  \right)= \begin{cases}
-\frac{2\beta_k}{a^2} \delta (\beta_k)\ \hfill \text{ when } \ j = k\\
0 \hfill \text{ when }\ j\not=k.
\end{cases}
\label{eq:helper-derivs}
\end{equation}
Plugging these into \eqref{eq:quotient-1} and simplifying gives,
\begin{align}
    \frac{\partial}{\partial \beta_k} \left( \delta(\beta_k) / \big(\lambda_n + \sum_{j=1}^K \delta(\beta_j) \big) \right) &= \frac{\left( -\frac{2\beta_k}{a^2} \delta (\beta_k)\right) \big( \lambda_n + \sum_{j=1}^K \delta(\beta_j) \big) - \delta(\beta_k) \left( -\frac{2\beta_k}{a^2} \delta (\beta_k) \right) }{ \left( \lambda_n + \sum_{j=1}^K \delta(\beta_j) \right)^2 } \\
    &= \frac{ \left( -\frac{2\beta_k}{a^2} \delta (\beta_k)\right)\left(\lambda_n + \sum_{j\not=k}\delta(\beta_j) \right) }{\left( \lambda_n + \sum_{j=1}^K \delta(\beta_j) \right)^2 } \\
    &=  -\frac{2\beta_k  w_k }{a^2}  \frac{\left(\lambda_n + \sum_{j\not=k}\delta(\beta_j) \right) }{\left( \lambda_n + \sum_{j=1}^K \delta(\beta_j) \right) } \label{eq:beta-i-term}.
\end{align}
In the above, the first equality comes from plugging in the helper derivatives, the second follows from factorizing a common term, and the third comes from merging terms that correspond to $w_k$.
Now we tackle the second set of terms in \eqref{eq:partial-betai-3} involving,
\begin{align*}
\frac{\partial}{\partial \beta_k} \left( \delta(\beta_i) / \big( \lambda_n + \sum_{j=1}^K \delta(\beta_j) \big) \right),
\end{align*}
where $i\not=k$. By the quotient rule again,
\begin{align}
\frac{\partial}{\partial \beta_k} \left( \delta(\beta_i) / \big( \lambda_n + \sum_{j=1}^K \delta(\beta_j) \big) \right) = \frac{\frac{\partial \delta(\beta_i)}{\partial \beta_k} \big( \lambda_n + \sum_{j=1}^K \delta(\beta_j) \big) - \delta(\beta_i) \frac{\partial \sum_{j=1}^K \delta(\beta_j) }{\partial \beta_k }}{ \left( \lambda_n + \sum_{j=1}^K \delta(\beta_j) \right)^2 }.
\label{eq:quotient-2}
\end{align}
Plugging in the helper derivatives in \eqref{eq:helper-derivs} and simplifying gives,
\begin{align}
\frac{\partial}{\partial \beta_k} \left( \delta(\beta_i) / \big( \lambda_n + \sum_{j=1}^K \delta(\beta_j) \big) \right) &= \frac{0 - \delta(\beta_i) \left( -\frac{2\beta_k}{a^2} \delta (\beta_k) \right) }{ \left( \lambda_n + \sum_{j=1}^K \delta(\beta_j) \right)^2 } \\
&= \frac{2\beta_k}{a^2} \frac{\delta(\beta_k) \delta(\beta_i)}{\left(\lambda_n + \sum_{j=1}^K \delta(\beta_j) \right)^2} \\
&= \frac{2\beta_k w_k w_i}{a^2}. \label{eq:beta-k-term}
\end{align}
Plugging \eqref{eq:beta-i-term} and \eqref{eq:beta-k-term} back into \eqref{eq:partial-betai-3} gives us the following expression for the partial derivative of the combined estimator $\psi$ with respect to $\beta_k$:
\begin{align}
\frac{\partial \psi }{\partial \beta_k} = -\psi_k  \left (\frac{2\beta_k  w_k }{a^2}  \frac{\left(\lambda_n + \sum_{j\not=k}\delta(\beta_j) \right) }{\left( \lambda_n + \sum_{j=1}^K \delta(\beta_j) \right) } \right) + \sum_{i\not=k} \psi_i \left( \frac{2\beta_k w_k w_i}{a^2} \right).
\label{eq:partial-betai-almost-done}
\end{align}
This can be further simplified to yield the final expression as,
\begin{align}
\frac{\partial \psi }{\partial \beta_k} &= \frac{2\beta_k w_k}{a^2} \left(\sum_{i\not=k} w_i\psi_i -\psi_k   \frac{\left(\lambda_n + \sum_{j\not=k}\delta(\beta_j) \right) }{\left( \lambda_n + \sum_{j=1}^K \delta(\beta_j) \right) } + w_k\psi_k - w_k\psi_k \right) \\
&= \frac{2\beta_k w_k}{a^2} \left(\psi - \psi_k \left[ \frac{\left(\lambda_n + \sum_{j\not=k}\delta(\beta_j) \right) }{\left( \lambda_n + \sum_{j=1}^K \delta(\beta_j) \right) } + w_k\right] \right) \\
&= \frac{2\beta_k w_k}{a^2} \left(\psi - \psi_k \right).
\label{eq:partial-betai-simplify}
\end{align}

\subsection{Partial derivative of $\psi$ with respect to $\psi_k$}

Here we again apply the linearity of differentiation to get,
\begin{align}
\frac{\partial \psi }{\partial \psi_k} = \frac{\partial}{\partial \psi_k} \left( \sum_{i=1}^{K} w_i \psi_{i} \right) = \sum_{i=1}^{K} \left( \frac{\partial (w_i \psi_i) }{\partial \psi_k} \right).
\end{align}
Applying the product rule gives us,
\begin{align}
    \frac{\partial \psi }{\partial \psi_k} = \sum_{i=1}^K \left (  w_i \frac{\partial \psi_i }{\partial \psi_k} + \psi_i \frac{\partial w_i }{\partial \psi_k} \right).
\end{align}
Similar to the previous subsection, notice that none of the weights $w_i$ are a function of any of the estimators $\psi_k$ (even when $i=k$). Thus, $\frac{\partial w_i}{\partial \psi_k}$ is always $0$. So the derivative simplifies to,
\begin{align}
\frac{\partial {\psi}}{\partial \psi_k} = \sum_{i=1}^K  w_i \frac{\partial \psi_i }{\partial \psi_k}.
\label{eq:partial-psi-1}
\end{align}
Finally, this simplifies nicely as,
\begin{equation}
    \frac{\partial \psi_i}{\partial \psi_k} =
    \begin{cases}
        1\  \text{ when } \hfill i=k \\
        0\ \text{ when } \hfill i\not=k.
    \end{cases}
\end{equation}
Plugging these in gives us the final expression for the partial derivative of $\psi$ with respect to $\psi_i$,
\begin{align}
\frac{\partial {\psi}}{\partial \psi_k} = w_k.
\label{eq:partial-psi-1}
\end{align}

\section{Proofs of Propositions~\ref{prop:backdoor-tests} and~\ref{prop:bdoor-fdoor-iv}}
\label{app:proposition-proofs}

\subsubsection*{Proof of Proposition~\ref{prop:backdoor-tests}}

\begin{proof}
    \cite{entner2013data} show under assumptions ${\cal A}_1$ and ${\cal A}_2$ in \eqref{eq:assumptions-backdoor} that $W \subseteq C$ is a valid backdoor adjustment if $Y\not\ci Z \mid W$ and $Y \ci Z \mid A, W$. The additional assumption ${\cal A}_3$ we make in \eqref{eq:assumptions-backdoor} ensures that $Y\not\ci Z \mid W$ is already true due to the existence of the path $Z \rightarrow A \rightarrow Y$. Thus, under assumptions ${\cal A}_1, {\cal A}_2, {\cal A}_3$, the independence $Y \ci Z \mid A, W$ alone is sufficient to ensure that model ${\cal M}_k$ is correct (i.e., $W$ is a valid backdoor adjustment set). The conclusion then follows, as we suppose that $\beta_k$ is an observed data parameter such that $\beta_k = 0 \iff Y \ci Z \mid A, W$.
\end{proof}

\subsubsection*{Proof of Proposition~\ref{prop:bdoor-fdoor-iv}}
\begin{proof}
The argument for $\beta_1 = 0$ implying correctness of the backdoor model with adjustment set $C$ is essentially the same as the proof of Proposition~\ref{prop:backdoor-tests} with $W=C$, since the assumptions used in Proposition~\ref{prop:backdoor-tests} are a superset of those in \eqref{eq:assumptions-backdoor}.

For the second implication, we make an argument similar to the one used in \cite{bhattacharya2022testability}. Under the Verma faithfulness assumption ${\cal A}_1$, we know that if $\beta_2 = 0$, the causal DAG $\G(V \cup U)$ must support identification of $P(A, Z, C, Y \mid \doo(m))$ by the g-formula and $Y \ci_{\text{d-sep}} Z \mid C$ in $\G_{\doo(m)}$. %Under the causal ordering assumptions ${\cal A}_2$, identification of this post-intervention distribution by the g-formula gives us $P(A, Z, C, Y \mid \doo(m)) = P(A, Z, C) \times P(Y\mid M=m, A, Z, C)$.
By assumption ${\cal A}_3$ we know that $Z \rightarrow A $ exists in $\G$. We now show that the existence of $A \rightarrow Y$ in $\G$ contradicts existence of the Verma constraint under faithfulness. Suppose $A \rightarrow Y$ does exist in $\G$. Then, $Y \not\ci_{\text{d-sep}} Z \mid C$ in $\G_{\doo(m)}$ due to the open path $Z\rightarrow A \rightarrow Y$, which is a contradiction. Thus, the frontdoor exclusion restriction of no $A \rightarrow Y$ is satisfied. Per \cite{tian2002general}, the  distribution $P(V\setminus \{A\} \mid \doo(a))$ where $A$ is a single treatment variable is identified if and only if there is no path from $A$ to any child $X$ of $A$ of the form $A \leftarrow \cdots \rightarrow X$ such that every collider on the path is an observed variable $V_i \in V$ and every non-collider on the path is an unmeasured variable in $U_i \in U$. By the causal ordering assumption ${\cal A}_2$ and the previous argument ruling out the existence of $A \rightarrow Y$, the only child of $A$ is $M$, so we only need to show that no such path exists from $A$ to $M$. The existence of such a path between $A$ and $M$ also contradicts the presence of the Verma constraint, as it would imply the  existence of some $U_i \in U$ that causes $M$ and thus also causes $Y$ (by assumption ${\cal A}_4$), preventing identification of $p(A, Z, C, Y \mid \doo(m))$. Thus, the second frontdoor restriction is satisfied and $P(Z, C, M, Y \mid \doo(a))$ is identified as \citep{tian2002general}
\begin{align*}
\left( \sum_{a'} P(a', Z, C) \cdot \E[Y\mid a', M, C, Z] \right) \cdot p(M \mid A=a, Z, C) \cdot
\end{align*}
It is straightforward then to obtain the frontdoor functional in \eqref{eq:frontdoor} for $\theta$ by summing over $Z, C$ to obtain $P(Y \mid \doo(a))$.

Finally for the third implication, $Z$ already satisfies the IV relevance condition by assumption ${\cal A}_3$. The second condition for IV validity is that $Z$ should be d-separated from $Y$ given $C$ in a graph where we delete the outgoing edges from $A$. This is satisfied when $M\ci Z \mid A, C$ and the Verma constraint holds per Corollary 1.1 in \cite{bhattacharya2022testability}.
\end{proof}

\section{Details Of Data Generating Processes}
\label{app:dgps}

In this subsection we describe the data generating processes (DGPs) underlying each of our numerical experiments. 
In our descriptions, we define $\text{expit}(x) \coloneqq 1/(1+\text{exp}(-x))$ for $x \in \mathbb{R}$. We use $\text{Bern}(p)$ as shorthand for the Bernoulli distribution with probability $p$ and $N(\mu, \sigma^2)$ as shorthand for the normal distribution with mean $\mu$ and variance $\sigma^2$.

\subsection{DGPs for Section~\ref{subsec:multi-bdoor}}
When the $U_1 \rightarrow Z$ edge is absent, the data are generated according to Figure~\ref{fig:app-dags}(a) as,
\begin{align*}
    Z &\sim \text{Bern}(0.5) \\
    U_1, U_2, C_1, C_2, C_3 &\sim N(0, 1) \\
    C_4 &\sim N\big(-2.5\cdot U_1 + 2\cdot U_2,\  1 \big) \\
    C_5 &\sim N\big(-2.5 \cdot U_1 + 2 \cdot U_2,\  1\big) \\
    A &\sim \text{Bern}\big(\text{expit}(2.75\cdot Z -3\cdot U_1 + C_1 + C_2 + C_3)\big) \\
    Y &\sim \text{Bern}\big(\text{expit}(1.5\cdot A + 2\cdot U2 + C_3 + C_4 + C_5)\big)
\end{align*}

When the $U_1 \rightarrow Z$ edge is present, the data are generated according to Figure~\ref{fig:app-dags}(a) as,
\begin{align*}
    U_1, U_2, C_1, C_2, C_3 &\sim N(0, 1) \\
    Z &\sim \text{Bern}\big(\text{expit}(U_1)\big) \\
    C_4 &\sim N\big(-2.5\cdot U_1 + 2\cdot U_2,\  1 \big) \\
    C_5 &\sim N\big(-2.5 \cdot U_1 + 2 \cdot U_2,\  1\big) \\
    A &\sim \text{Bern}\big(\text{expit}(2\cdot Z -3\cdot U_1 + C_1 + C_2 + C_3)\big) \\
    Y &\sim \text{Bern}\big(\text{expit}(A + 2\cdot U2 + C_3 + C_4 + C_5)\big)
\end{align*}

\subsection{DGPs for Section~\ref{subsec:bdoor-fdoor-iv}}
When both frontdoor and IV models are correct, data are generated according to Figure~\ref{fig:app-dags}(b) without the blue dashed edge as,
\begin{align*}
    Z &\sim \text{Bern}(0.5) \\
    C &\sim N(0, 1) \\
    U &\sim N(0, 1) \\
    A &\sim \text{Bern}\big(\text{expit}(2 \cdot Z + 2 \cdot C + 2 \cdot U)\big) \\
    M &\sim \text{Bern}\big( \text{expit}(-2 + 4\cdot A - 0.5\cdot C)\big) \\
    Y &\sim N\big(2\cdot M + 2 \cdot C + 2\cdot U, \ 1\big)
\end{align*}

When only the frontdoor model is correct, data are generated according to Figure~\ref{fig:app-dags}(b) with the blue dashed edge as,
\begin{align*}
    Z &\sim \text{Bern}(0.5) \\
    C &\sim N(0, 1) \\
    U &\sim N(0, 1) \\
    A &\sim \text{Bern}\big(\text{expit}(Z + C - 0.5 \cdot U)\big) \\
    M &\sim \text{Bern}\big( \text{expit}(-1 + 2\cdot A - Z + C)\big) \\
    Y &\sim N\big(2\cdot M - 0.75 \cdot C - 2\cdot U, \ 1\big)
\end{align*}

\section{Estimators Used In Numerical Experiments And Data Application}
\label{app:estimators}
Below we describe the specific estimators used in our numerical experiments and data application in more detail.

\subsection{Estimators Used In Section~\ref{subsec:multi-bdoor}}

\subsubsection*{Estimators for $\beta_k$}

For each log-odds ratio $\beta_k$, we use an influence function-based estimator of $\log(\text{OR}(Y, Z | A, W))$ from \cite{tchetgen2010doubly} and \cite{tan2019doubly}. An R implementation of their method is publicly available from Wu and Malinsky at \url{https://github.com/chaoqiw0324/ortest}; we translate this to Python for our purposes. First, let $\zeta(a, w) \coloneqq \E[Y \mid z_0, a, w]$ and $\eta(a, w) \coloneqq \E[Z \mid y_0, a, w]$. Let $O = \{Y, A, Z, W\}$. Then, an unbiased estimating function of $\beta_k$ is
\begin{align*}
    g(o, \zeta, \eta, \beta_k) = (y - \zeta(a, w))\cdot(z - \eta(a, w)) \cdot e^{-\beta_k\cdot y\cdot z}.
\end{align*}
Since $g$ is an estimating function, we can construct a point estimate $\beta_{k, n}$ of the log-odds ratio by first constructing estimators $\zeta_n$ and $\eta_n$ of $\zeta$ and $\eta$ respectively, and using these to obtain the value of $\beta_k$ that solves the equation $\sum_{i=1}^n g(o_i, \zeta_n, \eta_n, \beta_{k}) = 0$. This estimator is asymptotically linear under doubly robust conditions on the nuisance estimators $\zeta_n$ and $\eta_n$. Further, by standard Z-estimator theory, the influence function $\phi_{\beta_k} = B^{-1} g(o, \zeta, \eta, \beta_k)$, where $B^{-1} = \E[-\partial g/\partial \beta_k]$; see for example the review in \cite{cole2025five}. We can then construct an estimator of the influence function $\phi_{\beta_k, n}$ as
\begin{align*}
    \frac{1}{n}\sum_{i=1}^n \bigg( \frac{-\partial g(o_i, \zeta_n, \eta_n, \beta_k)}{\partial \beta_k} \bigg)^{-1} \cdot g(o_i, \zeta_n, \eta_n, \beta_{k}) \ \bigg\vert_{\beta_k = \beta_{k, n}}.
\end{align*}

\subsubsection*{Estimators for $\psi_k$}

For each ${\cal M}_k$ that uses an adjustment set $W$, we use the AIPW estimator \citep{bang2005doubly} of the backdoor formula \eqref{eq:backdoor} with $L = W$. Let $\pi(w) \coloneqq P(A=1 \mid W=w)$ and $\mu(a, w)\coloneqq \E[Y\mid A=a, W=w]$. Then the AIPW estimator is an asymptotically linear estimator of $\psi_k$ under doubly robust conditions on the nuisance estimators $\pi_n$ and $\mu_n$ of $\pi$ and $\mu$ respectively, with influence function $\phi_{\psi_k}$ given by
\begin{align*}
    \phi_{\psi_k} = \{ y - \mu(a, w)\} \bigg\{ \frac{a - \pi(w)}{\pi(w)(1-\pi(w))} \bigg\} + \{ \mu(1, w) - \mu(0, w) \} - \psi_k .
\end{align*}

\subsection{Estimators Used In Section~\ref{subsec:bdoor-fdoor-iv} and Data Application}

For these experiments we construct plug-in estimators of  $\beta_{k, n}$ and $\psi_{k, n}$ based on parametric nuisance estimation.

\subsubsection*{Estimators for $\beta_k$}
For $\beta_1$ and $\beta_3$, we construct parametric nuisance estimators $\zeta_n$ and $\eta_n$ of $\zeta(y, c, a, z) \coloneqq P(Y \mid A, C, Z)$ and $\eta(m, a, z, c) \coloneqq P(M\mid A, Z, C)$ respectively. In the parametric models we consider, estimates of the log-odds ratio $\log(\text{OR}(Y, Z \mid A, C))$ and $\log(\text{OR}(M, Z \mid A, C))$ are given by the coefficients of $Z$ in $\zeta_n$ and $\eta_n$ respectively.

For $\beta_2$, the parameter used to test the Verma constraint, let $\alpha\coloneq P(Y \mid Z, C)$ and $g(y, z, c, \alpha)$ be any unbiased estimating function used to construct a parametric nuisance estimator $\alpha_n$ of $\alpha$. The reweighted log-odds ratio $\log(\text{OR}_{\widetilde{P}}(Y, Z \mid C))$ is  obtained as the coefficient of $Z$,  one of the parameters estimated in $\alpha_n$, where $\alpha_n$ is the solution to a set of reweighted estimating equations $\sum_{i=1}^n g(y_i, z_i, c_i, \alpha)/\eta_n(m_i, a_i, z_i, c_i)=0$ and $\eta_n$ is a parametric nuisance estimator of $\eta$ as before. In terms of practical implementation, $\log(\text{OR}_{\widetilde{P}}(Y, Z \mid C))$ can be treated as the coefficient of $Z$ in a reweighted (linear or logistic) regression of the outcome on the covariates $C$ and anchor variable $Z$, where the weights are given by $1/\eta(m, a, z, c)$. 

\subsubsection*{Estimators for $\psi_k$}
For $\psi_1$, the backdoor functional \eqref{eq:backdoor} with $L=C$, we construct a plug-in estimator under parametric specification of the outcome regression model $\mu \coloneqq \E[Y\mid A, C]$. For $\psi_2$, the frontdoor functional \eqref{eq:frontdoor} with $L = C \cup \{Z\}$, we use a plug-in estimator of the more convenient dual IPW functional instead under parametric specification of the mediator model $\eta \coloneqq P(M\mid A, Z, C)$ \citep{fulcher2020robust, bhattacharya2022semiparametric}
\begin{align*}
\psi_{2, \text{Dual IPW}} = \E \bigg[ \frac{P(M\mid A=1, Z, C)}{P(M \mid A, Z, C)} \times Y \bigg] - \E \bigg[ \frac{P(M\mid A=0, Z, C)}{P(M \mid A, Z, C)} \times Y \bigg].
\end{align*}
Finally, for $\psi_3$, the IV functional \eqref{eq:iv} with $L=C$, we construct a plug-in estimator under parametric specification of the nuisance function in the numerator $\nu\coloneqq \E[Y\mid Z, C]$ and denominator $\xi \coloneqq \E[A \mid Z, C]$.

\end{document}